\DeclareMathOperator{\ad}{ad}
\def\nn{\nonumber}
\newcommand{\eps}{\varepsilon}
\newcommand{\mbf}[1]{{\boldsymbol {#1} }}
\def\ii{{\,{\rm i}\,}}
\def\dd{{\rm d}}
\def\Id{{\rm id}}
\def\Diff{{\sf Diff}}
\def\mfh{{\mathfrak h}}
\def\Lie{{\mathcal L}}
\newcommand{\eq}{\begin{equation}}
\newcommand{\eqend}{\end{equation}}
\newcommand{\eqa}{\begin{eqnarray}}
\newcommand{\nonueqa}{\begin{eqnarray*}}
\newcommand{\eqaend}{\end{eqnarray}}
\newcommand{\nonueqaend}{\end{eqnarray*}}
\newcommand{\bma}[1]{\begin{array}{#1}}
\newcommand{\ema}{\end{array}}
\newcommand{\bc}{\begin{center}}
\newcommand{\ec}{\end{center}}
\newcommand{\vect}{{\sf Vect}}
\renewcommand{\thefootnote}{\fnsymbol{footnote}}
\newcommand{\newsection}{\setcounter{equation}{0}\section}
\newcommand{\complex}{{\mathbb C}} 
\newcommand{\real}{{\mathbb R}} 
\def\alg{{\mathcal A}}
\def\Lcal{{\mathcal L}}
\def\Mcal{{\mathcal M}}
\def\ot{{\, \otimes\, }}
\newif\ifold             \oldtrue
\def\nn{\nonumber}
\def\e{{\,\rm e}\,}
\def\be{\begin{equation}}
\def\ee{\end{equation}}
\def\bea{\begin{eqnarray}}
\def\eea{\end{eqnarray}}
\def\bd{\begin{displaymath}}
\def\ed{\end{displaymath}}
\newcommand{\beq}{\begin{eqnarray}}
\newcommand{\eeq}{\end{eqnarray}}
\newdimen\normalarrayskip              
\newdimen\minarrayskip                 
\newif\ifold             \oldtrue            
\def\arraymode{\ifold\relax\else\displaystyle\fi} 
\def\@arrayskip{\ifold\baselineskip\z@\lineskip\z@
     \else
     \baselineskip\minarrayskip\lineskip2\minarrayskip\fi}
\def\@arrayclassz{\ifcase \@lastchclass \@acolampacol \or
\@ampacol \or \or \or \@addamp \or
   \@acolampacol \or \@firstampfalse \@acol \fi
\edef\@preamble{\@preamble
  \ifcase \@chnum
     \hfil$\relax\arraymode\@sharp$\hfil
     \or $\relax\arraymode\@sharp$\hfil
     \or \hfil$\relax\arraymode\@sharp$\fi}}
\def\@array[#1]#2{\setbox\@arstrutbox=\hbox{\vrule
     height\arraystretch \ht\strutbox
     depth\arraystretch \dp\strutbox
     width\z@}\@mkpream{#2}\edef\@preamble{\halign \noexpand\@halignto
\bgroup \tabskip\z@ \@arstrut \@preamble \tabskip\z@ \cr}%
\let\@startpbox\@@startpbox \let\@endpbox\@@endpbox
  \if #1t\vtop \else \if#1b\vbox \else \vcenter \fi\fi
  \bgroup \let\par\relax
  \let\@sharp##\let\protect\relax
  \@arrayskip\@preamble}
\newcommand{\p}{\partial}
\def\tP{{\tilde P}}
\def\As{{\APLstar}}
\def\uf{{\underline f}}
\def\ug{{\underline g}}
\def\uh{{\underline h}}
\def\uk{{\underline k}}
\def\uu{{\underline u}}
\def\uv{{\underline v}}
\def\uz{{\underline z}}
\def\FF{{\cal F}}
\def\al{\alpha}
\def\be{\beta}
\newtheorem{proposition}[equation]{Proposition}
\newtheorem{corollary}[equation]{Corollary}
\theoremstyle{definition}
\def\pphim{{\bar\phi\bar\phi}}
\begin{document}

\begin{titlepage}

\begin{flushright}

\baselineskip=12pt

EMPG--15--06

\end{flushright}

\begin{center}

\vspace{1cm}

\baselineskip=24pt

{\Large\bf Triproducts, nonassociative star products \\  and geometry of $\mbf
  R$-flux string compactifications}\footnote{Contribution to the Proceedings of ``Conceptual and Technical
Challenges for Quantum Gravity'', September 8--12, 2014, Sapienza
University of Rome, Italy; to be published in {\sl Journal of Physics
  Conference Series}, eds. G.~Amelino-Camelia, P.~Martinetti and J.-C.~Wallet.}

\baselineskip=14pt

\vspace{1cm}

{\bf Paolo Aschieri}${}^{1}$ \ and \ {\bf Richard
  J. Szabo}${}^{2}$
\\[5mm]
\noindent  ${}^1$ {\it Dipartimento di Scienze e Innovazione
  Tecnologica}\\ and {\it INFN Torino, Gruppo collegato di
  Alessandria}\\ {\it Universit\`a del Piemonte Orientale}\\
{\it Viale T. Michel 11, 15121 Alessandria, Italy}\\
Email: \ {\tt
    aschieri@unipmn.it}
\\[3mm]
\noindent  ${}^2$ {\it Department of Mathematics, Heriot-Watt University\\ Colin Maclaurin Building,
  Riccarton, Edinburgh EH14 4AS, U.K.}\\ and {\it Maxwell Institute for
Mathematical Sciences, Edinburgh, U.K.} \\ and {\it The Higgs Centre
for Theoretical Physics, Edinburgh, U.K.}\\
Email: \ {\tt R.J.Szabo@hw.ac.uk}
\\[30mm]

\end{center}

\begin{abstract}

\baselineskip=12pt

We elucidate relations between different approaches to describing the nonassociative deformations of
geometry that arise in non-geometric string theory. We demonstrate how
to derive configuration space triproducts exactly from nonassociative phase space
star products and extend the relationship in various directions. By
foliating phase space with leaves of constant momentum we obtain
families of Moyal--Weyl type deformations of triproducts, and we
generalize them to new triproducts of differential forms and of tensor fields. We prove
that nonassociativity disappears on-shell in all instances. We also
extend our considerations to the differential geometry of
nonassociative phase space, and study the induced deformations of
configuration space diffeomorphisms. We further develop general
prescriptions for deforming configuration space geometry from the
nonassociative geometry of phase space, thus paving the way  to
a nonassociative theory of gravity in non-geometric flux
compactifications of string theory.

\end{abstract}

\end{titlepage}
\setcounter{page}{2}

\newpage

{\baselineskip=12pt
\tableofcontents
}

\bigskip

\renewcommand{\thefootnote}{\arabic{footnote}}
\setcounter{footnote}{0}

\newsection{Introduction and summary\label{sec:intro}}

Noncommutative geometry has long
been believed to provide a framework for understanding
the generalizations of classical spacetime geometry required to describe Planck scale
quantum geometry, and ultimately quantum gravity. Yet its precise
relation with other approaches to quantum gravity such as string theory
have remained somewhat elusive. The surge of interest
in noncommutative geometry from string theory originally came about in the
\emph{open} string sector, where the massless bosonic modes contain
gauge and scalar fields: It was observed that D-brane worldvolumes
acquire a noncommutative deformation in the background of a non-zero
$B$-field~\cite{Chu1999,Schomerus1999,Seiberg1999}, which is moreover nonassociative when the flux $H=\dd
B$ is
non-vanishing~\cite{Cornalba2001}; the low-energy worldvolume effective
field theory is then
described by a noncommutative gauge theory. However,
this realization does not really shed light on connections with
the \emph{closed} string sector
whose massless bosonic modes contain gravitational degrees of freedom
such as the metric $g$, the $B$-field, and the
dilaton $\phi$. Vanishing of the beta-functions for these fields (at
one-loop) as required by conformal invariance of the worldsheet field theory
yields equations of motion which can be derived from the standard bosonic
closed string low-energy effective action
\beq
S= \int_M \, R(g)-\frac1{12} \,
\e^{-\phi/3}\, H\wedge *_g \, H-\frac16 \, \dd \phi
\wedge*_g\, \dd\phi \ .
\eeq
It can be hoped that an equivalent
noncommutative/nonassociative version of this effective field theory provides a
suitable target space framework in which to address problems related
to quantum gravity.

A precise connection between noncommutative geometry and
the closed string sector has been found recently through non-geometric
flux compactifications~\cite{Hull2004,Shelton2005}. The prototypical example involves a three-torus
with constant three-form $H$-flux; T-dualizing along its cycles gives rise to
constant geometric and non-geometric fluxes which can be depicted schematically
through the T-duality chain
\beq
H_{abc} \ \xrightarrow{ \ T_a \ } \ f^a{}_{bc} \ \xrightarrow{ \ T_b \
} \ Q^{ab}{}_c \ \xrightarrow{ \ T_c \ } \ R^{abc} \ .
\label{eq:Tchain}\eeq
While the $H$-flux and metric $f$-flux backgrounds can be described
globally as Riemannian manifolds, the $Q$-flux background involves
T-duality transformations as transition functions between local trivializations of its
tangent bundle and gives rise to a T-fold, while the $R$-flux background is a
purely non-geometric string vacuum as its metric and
$B$-field are not even locally defined because they depend on the winding
coordinates of the dual space. These non-geometric flux
compactifications have been recently purported to
have a global description in terms of noncommutative and
nonassociative
structures~\cite{Lust2010,Blumenhagen2010,Blumenhagen2011,Mylonas2012}
(see~\cite{Mylonas2014} for a recent review and further references).

Just as in the case of open strings, there are two ways in which one
can see the appearence of nonassociativity of target space coordinates
in the $R$-flux background: Either through canonical analysis of
the closed string coordinates or by
careful examination of worldsheet scattering amplitudes. In the formulation of~\cite{Lust2010}, nonassociativity of the
coordinates of
\emph{configuration space} arises
as a failure of the Jacobi identity of the bracket of canonical
variables on \emph{phase space}. This model is described in
Section~\ref{sec:NAdef}, and it defines a twisted Poisson
structure on phase space which can be quantized using
Kontsevich formality to obtain an explicit nonassociative star product of functions for
deformation quantization of phase
space~\cite{Mylonas2012}. The nonassociative star product on phase space defines
nonassociative tori in closed string theory. If we regard it as a
substitute for canonical quantization as suggested
by~\cite{Bakas2013}, then it may be used to formulate a consistent
nonassociative version of quantum mechanics~\cite{Mylonas2013}. On the other hand, from a worldsheet
perspective an exact analysis of conformal field theory correlation functions can be carried out to linear order in the $H$-flux, wherein
the backreaction due to the curvature of spacetime can be neglected. This was
performed by~\cite{Blumenhagen2011} where off-shell traces of
nonassociativity were found and packaged into ``triproducts'' of
fields directly on configuration space at linear order in $R$; at this order the elementary triproduct
of three fields was shown by~\cite{Mylonas2012} to be reproduced by
the associator of the phase space star
product. In~\cite{Blumenhagen2011} an all orders exponential formula for
arbitrary $n$-triproducts involving $n$ fields was moreover
conjectured, however its verification through a conformal field theory
analysis on flat space is not possible.

In this contribution we fill a technical gap in these two
approaches to the nonassociative geometry of $R$-flux
compactifications by explicitly \emph{deriving} the conjectural all
orders configuration space
$n$-triproducts of~\cite{Blumenhagen2011} from the phase space star
products of~\cite{Mylonas2012}. As their origins are quite different, this provides a non-trivial
confirmation of these triproducts and moreover of the validity of the phase
space formulation as the fundamental theory of closed strings
propagating in the non-geometric $R$-flux compactification, which is
the perspective advocated by~\cite{Mylonas2012} and pursued in the
present paper wherein phase space appears as the effective closed
string target space. Physical applications of this
perspective to scalar field theories can be found
in~\cite{Mylonas2014a}. 

For later reference, let us briefly review the phase space
approach. The starting point of~\cite{Mylonas2012} is the Courant
sigma-model based on the standard Courant algebroid $C = TM\oplus
T^*M$ over the target space $M$ with 
  natural frame $(\psi_I)=(\partial_i,\dd x^i)$; the structure maps
  consist of the fibre metric
  $h_{IJ}$ with $h_{ij}= \langle\partial_i,\dd x^j\rangle= \delta_i{}^j$, the anchor matrix $\rho(\psi_I)=
  P_I{}^i(x)\,\partial_i$,\footnote{Implicit summation over repeated upper and lower indices
  is understood throughout.}
 and the three-form
  $T_{IJK}(x)=[\psi_I,\psi_J,\psi_K]_C$. This data can be used to
  construct a topological field theory on a three-dimensional M2-brane
  worldvolume $\Sigma_3$ with action given by
\begin{eqnarray}
S_T = \int_{\Sigma_3}\, \phi_i\wedge \dd X^i+\frac12\,
h_{IJ}\, \alpha^I\wedge \dd\alpha^J-P_I{}^i(X)\, \phi_i\wedge
\alpha^I +
\frac16\, T_{IJK}(X)\, \alpha^I\wedge\alpha^J \wedge\alpha^K \ ,
\end{eqnarray}
where $X:{\Sigma_3} \to M$ is the M2-brane embedding, and $\alpha \in
\Omega^1(\Sigma_3,X^*C)$ and $\phi \in 
\Omega^2({\Sigma_3},X^*T^*M)$ are auxilliary fields. By taking $T$ to
be any of the fluxes in (\ref{eq:Tchain}), one can look at the
effective dynamics of the membrane fields; putting in all of the
fluxes simultaneously and independently of one another (e.g., by
replacing $M$ with its double manifold as in double field theory)
could then give a precise formulation of the formal T-duality chain
(\ref{eq:Tchain}) in terms of (quantum) gauge symmetries of the Courant
sigma-model. In the case of a purely geometric $H$-flux background it
is shown in~\cite{Mylonas2012} that the M2-brane model reduces to the
standard $H$-twisted Poisson sigma-model with target space $M$ on the
boundary $\Sigma_2=\partial\Sigma_3$, which describes closed string
propagation in the configuration space $M$ with a non-constant $B$-field. On the other hand, in the
case of a non-geometric constant $R$-flux the M2-brane model reduces
instead to a generalized Poisson sigma-model on $\Sigma_2$ whose action
can be expressed in linearized form by using auxilliary one-form fields $\eta_I$ as
\beq
S_R= \int_{\Sigma_2}\, \eta_I\wedge\dd X^I+\frac12\,
\Theta^{IJ}(X) \, \eta_I\wedge\eta_J \ ,
\eeq
where  $X= (X^i,P_i) :\Sigma_2\to T^*M$ now embed
closed strings into an effective target space which coincides with the
phase space of $M$, and
\beq
\Theta=\begin{pmatrix} R^{ijk}\, p_k & \delta^i{}_j \\
  -\delta_i{}^j & 0 \end{pmatrix}
\eeq
is a twisted Poisson bivector which describes a noncommutative and
nonassociative phase space (see Section~\ref{sec:NAdef}). This point of
view that closed strings should propagate in phase space rather than
configuration space is also
pursued in a more general context by~\cite{Freidel2013} wherein it is argued that the fundamental
symmetry of string theory contains phase space diffeomorphisms in
order to accomodate T-duality and other stringy features. 

In Section~\ref{sec:star} we shall in fact derive a generalization to
families of Moyal--Weyl type deformations of the triproducts which are
induced from a foliation of phase space
by leaves of constant momentum and describe their
physical significance; the leaf
of zero
momentum yields the triproducts
of~\cite{Blumenhagen2011}. We further give an explicit proof that
the integrated versions of these deformed triproducts coincide with
those of the associative Moyal--Weyl star products, thus reproducing
the physical expectations that nonassociativity is not visible in
on-shell conformal field theory correlation
functions. This is a remarkable cancellation that happens due
to specific orderings of  bracketing phase space star products. 
In Section~\ref{sec:cochain} we generalize these results to new (deformed) triproducts of
arbitrary differential forms via nonassociative deformations of phase space exterior products
following the cochain twist deformation formalism proposed by~\cite{Mylonas2013}  for the study of non-geometric fluxes.
These algebraic techniques, originally developed to study quasi-Hopf
algebra symmetries \cite{Drinfeld}, are further extended to study
the differential geometry of nonassociative phase space, i.e., the
deformed Lie algebra of infinitesimal diffeomorphisms and their Lie derivative action
on exterior forms and tensor fields.
This geometry is then induced on each constant momentum leaf leading
in particular to a deformed Lie algebra of infinitesimal
diffeomorphisms on configuration space.

While the results of the present paper are mostly technical, 
they offer further insight into the nature of
nonassociative geometry in non-geometric string theory in at least three
ways. Firstly, we give a fairly general prescription for 
inducing configuration space triproducts from nonassociative star
products on phase space and clarify in what sense 
these products obey the physical
expectations of cyclicity of on-shell string
scattering amplitudes, though a complete geometric framework for
choosing appropriate section slices remains to be worked out. 
The different leaves 
correspond to a
multitude of choices of Bopp shifts in phase space and are reminescent
of the section conditions in double field theory (the weak and strong
constraints). Secondly, our considerations based on cochain twist
theory are general enough that
they may be applicable beyond the flat space limit to the
\emph{curved} space triproducts conjectured by~\cite{Blumenhagen2013}
for non-constant fluxes, and hence they may provide a precise link between
the phase space formalism and the framework of double field
theory; such a connection is explored by~\cite{Freidel2013} in a
related but more general context. Thirdly, our derivation and considerations of new triproducts
involving arbitrary differential forms and infinitesimal diffeomorphisms on configuration space is a
first step to understanding the construction of nonassociative
deformations of gravity and their physical relevance in on-shell
string theory.

\newsection{Nonassociative geometry of $R$-space\label{sec:NAdef}}

\subsection{Phase space formulation}

We shall begin by describing the phase space model for the non-geometric $R$-flux background and discuss some ways in which it may be interpreted as a deformation of the geometry of configuration space. Let $M=\real^d$ be the decompactification limit of a $d$-torus, endowed with
a constant three-form $H$-flux. As reviewed in
Section~\ref{sec:intro}, via T-duality this space is mapped to a
non-geometric string background with a constant trivector
$R$-flux $R=\frac1{3!}\, R^{ijk}\, \partial_i\wedge\partial_j\wedge \partial_k$,
where $\partial_i=\frac\partial{\partial x^i}$ in local coordinates
$x=(x^i)\in \real^d$. Explicit string and
conformal field theory computations show that the string geometry
acquires a noncommutative and nonassociative deformation for closed
strings which wind and propagate in the non-geometric
background~\cite{Blumenhagen2010,Lust2010,Blumenhagen2011,Condeescu2012,Andriot2013}. An
explicit realization of such a nonassociative deformation of the
spacetime geometry is provided by the phase space description of the parabolic
$R$-flux model on the cotangent bundle $\Mcal=T^*M=M\times
(\real^d)^*$ with local coordinates $(x,p)$, where
$p=(p_i)\in(\real^d)^*$. In this setting, the deformation is described
by a commutator algebra for the local phase space coordinates given
by
\beq
[x^i,x^j]=\frac{\ii \ell_s^4}{3\hbar}\, R^{ijk} \,p_k \ 
,\qquad [x^i,p_j]= \ii \hbar \, \delta^i{}_j \qquad\mbox{and} \qquad
[p_i,p_j]=0 \ , \label{luestalg}
\eeq
which has a non-trivial Jacobiator
\beq
[x^i,x^j,x^k]= \ell_s^4\, R^{ijk} \ .
\label{eq:3bracket}\eeq
In the point particle limit $\ell_s=0$ this is just the usual
Lie algebra of ordinary quantum phase space.

As anticipated by~\cite{Blumenhagen2010,Lust2010,Blumenhagen2013}, and
proven in~\cite{Mylonas2013} directly from the phase space model, the $R$-flux background does not permit
the notion of a point due to a minimal volume which enters an
uncertainty relation for the position coordinates given by 
\beq
\Delta
x^i\, \Delta x^j\, \Delta x^k\geq \mbox{$\frac12$} \, \ell_s^4\, R^{ijk} \ .
\eeq
This lack of a
notion of a point illustrates why the $R$-flux compactification is not
even locally geometric; this is evident in the phase space model
which, as a result of T-duality, requires both position coordinates
and momenta in any local description. Hence it is not clear how to
formulate a gravity theory, or any other field theory, on this
nonassociative space. An approach based on fundamental loop space
variables, rather than functions on $M$, was pursued in~\cite{Saemann2011,Saemann2013}; the usage of
such variables is also mentioned in~\cite{Blumenhagen2013}. This approach
reflects the ``non-locality'' of the non-geometric $R$-space. It
is natural in the lift of Type~IIA string theory
to M-theory in which the (closed string) boundary of an open M2-brane ending on an
M5-brane in a constant $C$-field background gives rise to a noncommutative loop space algebra on the
M5-brane worldvolume~\cite{Bergshoeff2000,Kawamoto2000}, which corresponds to the
noncommutativity and nonassociativity felt by a fundamental closed
string in a constant $H$-flux; this perspective is utilized in the
formulation by~\cite{Mylonas2012} of closed string propagation in the
non-geometric $R$-flux background using M2-brane degrees of freedom,
as reviewed in Section~\ref{sec:intro}, and it connects open and
closed string noncommutative geometry.

Another set of fundamental variables 
is obtained by considering the algebra $\Diff(M)$ of (formal)
differential operators on $M$ with typical elements of the form
\beq
\underline{f}(x) = f(x) +\sum_{k=1}^\infty\, f^{i_1\cdots
  i_k}(x)\, \partial_{i_1}\cdots \partial_{i_k} \ ,
\eeq
where $f,f^{i_1\cdots i_k}\in A:=C^\infty(M)$ and 
$\underline{f}(x)g(x) = f(x)\, g(x) +\sum_{k}\, f^{i_1\cdots
  i_k}(x)\, \partial_{i_1}\cdots \partial_{i_k} g(x)$  for all $g\in A$.
 From the last two commutation relations in (\ref{luestalg}) we can identify
$p_i=-\ii\hbar \, \partial_i$ and the nonassociativity relations
as relations involving differential operators
\beq
[x^i,x^j]=\mbox{$\frac13$} \, \ell_s^4 \, R^{ijk} \,\partial_k
\ .
\eeq
 Such an interpretation was 
advocated by~\cite{Ho2000,Ho2001} in the (associative) context of open
strings on D-branes in non-constant $B$-field backgrounds.

A related interpretation of the twisted Poisson structure
(\ref{luestalg}) on phase space $\Mcal=T^*M$, as discussed in~\cite[Section~2.5]{Mylonas2012},
is that of a higher Poisson structure on the 
multivector field algebra
$\vect^\bullet(M)= C^\infty(M,\bigwedge^\bullet
TM)$.

\subsection{Deformations of configuration space geometry}

In order to formulate a nonassociative theory of gravity on the
original configuration space $M$, we note that there is a natural
foliation of $\Mcal$ over $M$ defined by the (global)
sections $s_{\bar p}:M\to\Mcal$ with $s_{\bar p}(x)=(x,\bar
p)$ for $\bar p\in (\real^d)^*$; this simply means that we can realise
$M$ as any leaf of constant
momentum in phase space. We can therefore restrict functions on phase space $\underline{f}
\in
C^\infty(\Mcal)$ to $C^\infty(M)$ via the family of pullbacks
$f_{\bar p}:= s_{\bar p}^*\, \underline{f}$. Thus via pullback along
the bundle projection $\pi:\Mcal\to M$ we can transport functions on $M$ into the cotangent
bundle, perform the necessary nonassociative deformations on $\Mcal$,
and then pullback along $s_{\bar p}$ to obtain the desired nonassociative deformations of
the geometry of $M$; schematically, we can depict this foliation by quantizations of
the configuration manifold $M$ via the commutative diagrams
\beq
\xymatrix@=15mm{ C^\infty(\Mcal) \ \ar[r]^{\mathscr{Q}} & \ \widehat{C^\infty(\Mcal)} \ar[d]^{s_{\bar
      p}^*} \\
 C^\infty(M)\ar[u]^{\pi^*} \ \ar@{.>}[r]_{\mathscr{Q}_{\bar p}} & \ \widehat{ C^\infty(M)}
}
\eeq
with $s_{\bar p}^*\circ \pi^*=(\pi\circ s_{\bar p})^*=\Id$, where $\mathscr{Q}$ is the nonassociative quantization 
of $\Mcal$ whose foundations are
developed by~\cite{Mylonas2012,Mylonas2013}, while $\mathscr{Q}_{\bar
  p}$ is the desired nonassociative quantization of $M$ induced by the diagram. 
We can use this perspective to obtain
nonassociative field theories on $M$ via this systematic implementation
of the nonassociative geometry of $\Mcal$.

This perspective also suggests a way in which to obtain richer
deformations of the geometry of configuration space from the
nonassociative quantization of phase space. By pullbacks along $\pi^*$ we consider fields $\underline{f}
\in C^\infty(\Mcal)$ that satisfy the \emph{section constraints}
\beq
\tilde\partial^i\, \underline{f}=0 \qquad \mbox{for} \quad i=1,\dots,d \ ,
\label{eq:section}\eeq
where $\tilde\partial^i:= \frac\partial{\partial p_i}$. 
There is a natural invariant metric on phase space ${\cal M}=T^*
M$ which in a local frame is defined by
\beq
\gamma=\dd x^i\otimes \dd p_i +\dd p_i\otimes \dd x^i \ .
\label{eq:gamma}\eeq
It defines an $O(d,d)$-structure on $\Mcal$, i.e., a reduction of the structure group $GL(2d,\real)$ of the tangent bundle $T\Mcal$ to the subgroup $O(d,d)$; this is the symmetry group underlying quantum mechanical Born reciprocity and its relation to T-duality is explained by~\cite{Freidel2013}. 
Drawing now on the evident similarities
with double field theory (see~\cite{Aldazabal2013,Hohm2013} for
reviews and~\cite{Vaisman2012} for a rigorous mathematical treatment), a natural weakening of the section condition
(\ref{eq:section}) involves constraints on phase space functions
$\underline{f}$ which remove their momentum dependence 
locally only up to an $O(d,d)$ transformation.
For example, if
$\theta$ is any constant nondegenerate bivector on $M$, then this
constraint is solved by any phase space function $\underline{f}$ which
depends only on the non-local Bopp shifts $x+\theta\cdot p$; this
$d$-dimensional slice of the $2d$-dimensional phase space geometry can
be rotated to the constant momentum leaves above via an $O(d,d)$
transformation. 
In particular, the tangent bundle on phase space decomposes as $T\Mcal\cong L\oplus L^*$ where $L$ is the tangent bundle on the leaves of the foliation and $L^*$ is its dual bundle with respect to the orthogonal complement in the metric (\ref{eq:gamma}). The section constraints on functions $\underline{f}
\in
C^\infty(\Mcal)$ and more generally on covariant tensor fields $\underline{T}\in C^\infty(\Mcal,\bigotimes^\bullet T^*\Mcal)$ now read as
\beq
Z(\,\underline{f}\,)=0 \qquad \mbox{and} \qquad \imath_{Z}\,\underline{T}=0 = \Lcal_{Z}\,\underline{T}
\label{eq:gensection}\eeq
for all sections $Z\in C^\infty(\Mcal, L^*)$, where $\imath$ denotes contraction and $\Lcal$ is the Lie derivative. This means that the set of admissible fields is constrained to \emph{foliated} tensor fields with respect to the distribution $L^*$. For the foliation defined by $s_{\bar p}$, one has $L=TM$ and the section constraints can be imposed by taking $Z=\tilde\p^i$ for each $i=1,\dots,d$. A similar perspective on closed string target spaces is addressed in~\cite{Freidel2013} where connections with double field theory are also explored.

These more general foliations of phase space could lead to much richer classes of deformations of $M$, because the nonassociative
quantizations $\mathscr{Q}$ are \emph{not} $O(d,d)$-invariant, as is evident from the defining relations (\ref{luestalg}). Moreover, quantization of the section conditions (\ref{eq:gensection}) themselves could lead to interesting deformations of the foliated tensor fields on the $d$-dimensional slices. It would be interesting
to repeat the analysis of this paper for such more general section
slices. Among other things, this could help elucidate possible relationships between the
noncommutative gerbe structure~\cite{Aschieri:2002fq} on phase space underlying the
nonassociative deformations~\cite{Mylonas2012} and the abelian gerbe
structures underlying the generalised manifolds in double
geometry~\cite{Berman2014,Hull2014}. It should also help in
understanding how to lift geometric objects from $M$ to its cotangent
bundle $\Mcal=T^*M$ in a way suited to describe nonassociative
deformations of the geometry and of gravity directly on the
configuration manifold $M$.

\newsection{Triproducts from phase space star products\label{sec:star}}

\subsection{Families of $n$-triproducts}

We shall now derive triproduct formulas in the context of Section~\ref{sec:NAdef} which
include those of~\cite{Blumenhagen2011} as special cases. Our starting
point is the nonassociative star product $\star$ on phase space derived by~\cite{Mylonas2012}
(see also~\cite{Bakas2013}) which quantizes the
commutation relations (\ref{luestalg}). Here we shall utilize the
expression for this star product as a
twisted convolution product (see  Section \ref{sec:cst} for a different derivation). 
For phase space functions
$\underline{f} ,\underline{g} \in
C^\infty(\Mcal)$, the integral formula 
derived in ~\cite[eq.~(3.38)]{Mylonas2013} adapted to the normalizations in
(\ref{luestalg})  reads as
\bea
(\, \underline{f} \star \underline{g}\, )(x,p) &=& \frac1{(\pi\, \hbar)^{2d}}\, \int\!\!\!\int_M\, \dd^d z\
\dd^d z' \ \int\!\!\!\int_{M^*}\, \dd^d k\ \dd^d k' \
\underline{f}(x+z,p+k)\, \underline{g}(x+z', p+k'\,) \nonumber \\ &&
\hspace{5cm} \ \times \ \e^{-\frac{2\ii}\hbar\, (k\cdot
  z'-k'\cdot z)} \, \e^{-\frac{2\ii\ell_s^4}{3\hbar^3}\, R(k,k',p)} \ ,
\label{intformula}\eea
where $k'\cdot z:= k'_i\, z^i$ and $R(k,k',p):= R^{ijl}\,
k_i\, k'_j\, p_l$. Restricted to functions $f,g\in C^\infty(M)$ by the
pullback along $\pi:\Mcal\to M$, after rescaling, this product reads as
\bea
(\pi^*{f} \star \pi^*{g})(x,p) &=& \frac1{(2\pi)^{2d}}\, \int\!\!\!\int_M\, \dd^d z\
\dd^d z' \ {f}(x+z)\, {g}(x+z'\,) \nonumber \\ && \qquad \qquad
\times \ \int\!\!\!\int_{M^*}\, \dd^d k\ \dd^d k' \
\e^{-\ii(k\cdot
  z'-k'\cdot z)} \, \e^{-\frac{\ii\ell_s^4}{6\hbar}\, R(k,k',p)} \ .
\eea
In general, the star product of two fields on $M$ is a field on
$\Mcal$, i.e., a differential operator on $M$. Thus the star product
$\star$ does not close in the algebra $A=C^\infty(M)$. We can define
a family of products of fields on $M$ by pulling this product
back under the local sections $s_{\bar  p}:M\to \Mcal$ that foliate
${\cal M}$ with leaves isomorphic to $M$.  This defines a family of 2-products
$\mu^{(2)}_{\bar p}:A\otimes A\to A$ given
by
\bea
\mu^{(2)}_{\bar p} (f\otimes g)(x)&:=& s_{\bar p}^*\, (\pi^*f\star \pi^*g)(x,p) \nonumber \\[4pt] 
&=& \frac1{(2\pi)^{2d}}\, \int\!\!\!\int_M\, \dd^d z\
\dd^d z' \ {f}(x+z)\, {g}(x+z'\,) \nonumber \\ && \qquad \qquad \times
\ \int\!\!\!\int_{M^*}\, \dd^d k\ \dd^d k' \
\e^{-\ii(k\cdot
  z'-k'\cdot z)} \, \e^{-\frac{\ii\ell_s^4}{6\hbar}\, \theta_{\bar
    p}(k,k'\, )}
\nonumber\\[4pt]
&=& \frac1{(2\pi)^{2d}}\, \int\!\!\!\int_{M^*} \, \dd^dk\ \dd^dk' \
\e^{\ii (k+k'\,)\cdot x}\, \e^{\frac{\ii\ell_s^4}{6\hbar}\, \theta_{\bar
    p}(k,k'\,)} \nonumber \\
&& \qquad \qquad \times \ \int\!\!\!\int_M \, \dd^dz\ \dd^dz'\ f(z)\,
g(z'\,)\, \e^{-\ii(k\cdot z'+k'\cdot z)} 
\nonumber\\[4pt] 
&=& \int_{M^*}\, \dd^d k \ \int_{M^*}\, \dd^dk' \ \hat f(k'\,)\, \hat
g(k) \e^{\ii(k+k'\,)\cdot x}\, \e^{\frac{\ii\ell_s^4}{6\hbar}\, \theta_{\bar
    p}(k,k'\,)} \nonumber\\[4pt] 
&=& \int_{M^*}\, \dd^d k \ \int_{M^*}\, \dd^dk' \ \hat f(k)\, \hat
g(k'\,) \e^{\ii(k+k'\,)\cdot x}\, \e^{-\frac{\ii\ell_s^4}{6\hbar}\, \theta_{\bar
    p}(k,k'\,)} \ , 
\eea
where
\beq
\theta_{\bar p}=\mbox{$\frac12$}\, \theta^{ij}_{\bar
  p}\, \partial_i\wedge \partial_j :=
\mbox{$\frac12$} \, R^{ijk}\, \bar
p_k\, \partial_i \wedge \partial_j\label{defthetap}
\eeq
are constant bivectors on $M$
along the leaves of constant momentum in phase space, and
\beq
\hat f(k)=\frac1{(2\pi)^d}\, \int_M\, \dd^dx\ f(x)\, \e^{-\ii k\cdot x}
\eeq
is the Fourier transform of the function $f\in C^\infty(M)$. This last
expression is simply the momentum space representation of the associative
Moyal--Weyl star product $\As_{\bar p}$ determined by the bivector
$-\frac{\ell_s^4}{3\hbar}\, \theta_{\bar p}$, and we thus find an
expression for the 2-products in terms of a bidifferential operator
\beq\label{mu2bidif}
\mu^{(2)}_{\bar p}(f\otimes g) = f\, \As_{\bar p}\, g :=
\mu\Big(\exp\big(\mbox{$\frac{\ii\ell_s^4}{6\hbar}\, \theta_{\bar
    p}^{ij}\, \partial_i\otimes \partial_j$}\big)(f\otimes g)\Big) 
\eeq
where $\mu(f\otimes g)= f\, g$ is the pointwise multiplication of
functions in $A=C^\infty(M)$.

We shall now generalize this result to $n$-products $\mu_{\bar
  p}^{(n)}:A^{\otimes n}\to A$
of functions $f_1,\dots,f_n\in A$ for $n\geq3$, which we define in an analogous
way. However, we must keep in mind that the phase space star product
$\star$ is nonassociative, so we have to keep track of the order in
which we group binary products of functions on $\Mcal$; this is even
true after integration over $\Mcal$~\cite{Mylonas2013}. We choose the ordering
\beq
\mu_{\bar p}^{(n)}(f_1\otimes\cdots \otimes f_n):= s_{\bar
  p}^*\big[\big((\cdots(\pi^*f_1\star \pi^*f_2)\star \pi^*f_3)\star \cdots)\star
\pi^* f_n \big] \ ,
\label{eq:ntriproddef}\eeq
and discuss the significance of the other orderings below. We shall
compare this $n$-product to 
iterations of the Moyal--Weyl star products given by
\beq
 \As_{\bar p}(f_1\otimes \cdots\otimes f_n) :=
f_1\, \As_{\bar p}\, \cdots\, \As_{\bar p}\,  f_n = \mu\Big[\exp\Big(\,
\frac{\ii\ell_s^4}{6\hbar}\, \sum_{1\leq a<b\leq n}\, \theta_{\bar
    p}^{ij}\, \partial^a_i \;\partial^b_j\,
  \Big)(f_1\otimes\cdots\otimes f_n)\Big] \ ,
\label{eq:nmoyaldef}
\eeq
where $\partial_i^a=(\Id\otimes \cdots \otimes \partial_i
\otimes\cdots\otimes \Id)$, $a=1,\dots,n$  denotes the derivative
$\partial_i$ acting on the $a$-th factor of the tensor product
$f_1\otimes\cdots\otimes f_n$; note that no bracketings need be
specified here due to associativity of the
products $\As_{\bar p}$ for all $\bar p$. 
\begin{proposition} 
$$
\mu_{\bar p}^{(n)}(f_1\otimes\cdots\otimes f_n) = \As_{\bar
  p}\Big[\exp\Big(\, \frac{\ell_s^4}{12}\ \sum_{1\leq a<b<c\leq n}\,
R^{ijk}\, \partial_i^a\, \partial_j^b\, \partial_k^c\, \Big)
(f_1\otimes\cdots\otimes f_n)\Big] \ .
$$
\label{prop:ntriproduct}\end{proposition}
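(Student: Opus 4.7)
The plan is to prove the identity by induction on $n$, working in momentum space. Writing
$F_n(x,p) := (\cdots (\pi^*f_1\star\pi^*f_2)\star\cdots)\star\pi^*f_n$
so that $\mu^{(n)}_{\bar p}(f_1\otimes\cdots\otimes f_n) = s^*_{\bar p}F_n$, the inductive claim will be the stronger assertion that
\begin{equation*}
F_n(x,p) = \int\prod_{a=1}^n \dd^d k_a\ \hat f_a(k_a)\,\e^{\ii(k_1+\cdots+k_n)\cdot x}\,\Phi_n(k_1,\ldots,k_n;p)
\end{equation*}
with
\begin{equation*}
\Phi_n = \exp\Bigl(-\tfrac{\ii\ell_s^4}{6\hbar}\sum_{a<b\le n}R^{ijl}p_l\,k_{ai}k_{bj}\;-\;\tfrac{\ii\ell_s^4}{12}\sum_{a<b<c\le n}R^{ijl}\,k_{ai}k_{bj}k_{cl}\Bigr).
\end{equation*}
The base case $n=2$ is the momentum-space restatement of (\ref{mu2bidif}), with empty triple sum.

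For the inductive step I substitute the Fourier representation of $F_n$, together with the obvious one for $\pi^*f_{n+1}$, into the integral formula (\ref{intformula}) for $F_n \star \pi^*f_{n+1}$. The $z$- and $z'$-integrations produce two $\delta$-functions which, up to a Jacobian $(\hbar/2)^{2d}$ that cancels the prefactor $(\pi\hbar)^{-2d}$, impose $k=\tfrac{\hbar}{2}k_{n+1}$ and $k'=-\tfrac{\hbar}{2}(k_1+\cdots+k_n)$ on the remaining phases. Three contributions to $\Phi_{n+1}$ then arise. (i)~The $p$-independent triple piece of $\Phi_n$ passes through unchanged. (ii)~The Bopp shift $p\mapsto p+k$ inside $F_n(x+z,p+k)$, upon substitution of $k$, produces the new triples $-\tfrac{\ii\ell_s^4}{12}\sum_{a<b\le n}R^{ijl}k_{ai}k_{bj}k_{(n+1)l}$, completing the triple sum to $\sum_{a<b<c\le n+1}$. (iii)~The explicit twist $\e^{-2\ii\ell_s^4 R(k,k',p)/3\hbar^3}$ in (\ref{intformula}), after the same $\delta$-substitution and use of the total antisymmetry of $R^{ijl}$, contributes $-\tfrac{\ii\ell_s^4}{6\hbar}\sum_{a\le n}R^{ijl}p_l\,k_{ai}k_{(n+1)j}$ and thus extends the pairwise sum to $a<b\le n+1$. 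Collecting (i)--(iii) yields $\Phi_{n+1}$ in the claimed form.

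To deduce the proposition, apply $s^*_{\bar p}$, which sets $p=\bar p$, and translate the Fourier multipliers back to differential operators via $k_{ai}\leftrightarrow -\ii\partial^a_i$. The pairwise exponent becomes $\exp(\tfrac{\ii\ell_s^4}{6\hbar}\theta_{\bar p}^{ij}\sum_{a<b}\partial^a_i\partial^b_j)$, which is exactly the iterated Moyal--Weyl product $\As_{\bar p}$ as defined in (\ref{eq:nmoyaldef}); the triple exponent becomes $\exp(\tfrac{\ell_s^4}{12}R^{ijk}\sum_{a<b<c}\partial^a_i\partial^b_j\partial^c_k)$, giving the statement. The main obstacle is purely combinatorial: tracking signs, powers of $\hbar$ and $\delta$-function Jacobians through the integrations, and in particular exploiting the total antisymmetry of $R^{ijl}$ to rewrite the mixed term $R(k_{n+1},k_1+\cdots+k_n,p)$ produced in step (iii) as the ordered pairwise sum over $a<b\le n+1$. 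No genuinely new ingredient beyond the integral representation (\ref{intformula}) and the base case (\ref{mu2bidif}) is needed.
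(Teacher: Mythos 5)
Your proposal is correct, and the computation checks out: the base case agrees with the momentum--space form of (\ref{mu2bidif}) extended to arbitrary $p$, the Jacobians from the two delta--functions indeed cancel the prefactor $(\pi\hbar)^{-2d}$, the constraints $k=\tfrac{\hbar}{2}k_{n+1}$ and $k'=-\tfrac{\hbar}{2}(k_1+\cdots+k_n)$ are the right ones given the sign conventions in (\ref{intformula}), and the three contributions (i)--(iii) assemble precisely into $\Phi_{n+1}$; setting $p=\bar p$ reproduces the paper's intermediate formula (\ref{eq:ntriprodmom}). The route is, however, organized differently from the paper's. The paper writes down in one stroke the fully iterated $2(n-1)d$--dimensional integral representation of $s^*_{\bar p}\bigl[((\cdots(\pi^*f_1\star\pi^*f_2)\star\cdots)\star\pi^*f_n\bigr]$, then performs all the $z$-- and $z'$--integrations simultaneously, extracting delta--function constraints on the $k'_a$ and carrying out a sequence of relabelings before invoking multilinearity and antisymmetry of $R$. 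You instead run a genuine induction on $n$ whose invariant is strictly stronger than the statement being proved: a closed momentum--space form for the \emph{un-pulled-back} product $F_n(x,p)$ at arbitrary momentum $p$. That strengthening is not optional decoration --- it is what allows the induction to close, since the recursion evaluates $F_n$ at the shifted argument $p+k$, and it makes transparent where each piece of the exponent comes from (the new triple terms arise from the Bopp shift of the momentum slot of the pairwise phase; the new pairwise terms arise from the explicit twist phase in (\ref{intformula})). The paper's direct computation has the advantage of landing immediately on (\ref{eq:ntriprodmom}), which is then reused verbatim in the proof of Proposition \ref{prop:onshell}, but your inductive bookkeeping is cleaner and less error--prone; your $\Phi_n\big|_{p=\bar p}$ supplies the same formula, so nothing downstream is lost.
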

\begin{proof}
By iterating the integral formula (\ref{intformula}), it is
straightforward to derive the integral representation
\bea
&& s^*_{\bar p}\big[\big((\cdots(\pi^*f_1\star \pi^*f_2)\star \pi^*f_3)\star \cdots)\star
\pi^* f_n \big](x) \nonumber \\ && \qquad \qquad \ = \ \frac1{(\pi\,
  \hbar)^{2(n-1)d}}\, \prod_{a=1}^{n-1}\ \int\!\!\!\int_{M}\, \dd^d z_a\
  \dd^d z_a'\ \int\!\!\!\int_{M^*}\, \dd^d k_a\ \dd^d k_a'\
  \e^{-\frac{2\ii}\hbar\, (k_a\cdot z_a'-k_a'\cdot z_a)} \nonumber\\
&&\qquad \qquad \qquad \qquad \times \ \prod_{b=1}^n\,
f_b(x+z_1+\cdots+z_{n-b}+ z'_{n-b+1}) \\ && \qquad
\qquad \qquad \qquad \qquad \qquad \times \ 
\exp\Big(\,-\frac{2\ii\ell_s^4}{3\hbar^3}\ \sum_{c=1}^{n-1}\,
R(k_c,k_c',p+k_{1}+\cdots + k_{c-1})\, \Big) \nonumber
\eea
with the conventions $z_0=0, z_n'=0, k_0=0$. After rescaling we thus get
\bea
\mu_{\bar p}^{(n)}(f_1\otimes\cdots\otimes f_n) (x) &=&
\frac1{(2\pi)^{2(n-1)d}}\ \prod_{a=1}^{n-1}\ \int\!\!\!\int_{M}\, \dd^d
  z_a\ 
  \dd^d z_a'\ \int\!\!\!\int_{M^*}\, \dd^d k_a\ \dd^d k_a'\
  \e^{-\ii (k_a\cdot z_a'-k_a'\cdot z_a)} \nonumber\\
&&\qquad \qquad \times \ \prod_{b=1}^n\,
f_b(x+z_1+\cdots z_{n-b}+ z'_{n-b+1}) \nonumber\\
&& \qquad \qquad \qquad \qquad \times \
\exp\Big(-\frac{\ii\ell_s^4}{12}\ \sum_{c=1}^{n-1}\,
R(k_c,k_c',k_{1}+\cdots + k_{c-1})\Big) \nonumber\\
&& \qquad \qquad \qquad \qquad \qquad \qquad \times \
\exp\Big(-\frac{\ii\ell_s^4}{6\hbar}\ \sum_{c=1}^{n-1}\, \theta_{\bar
  p}(k_c,k_c'\, ) \Big) \nonumber\\[4pt]
&=& \frac1{(2\pi)^{2(n-1)d}}\ \prod_{a=1}^{n-1}\ \int\!\!\!\int_{M}\, \dd^d z_a\
  \dd^d z_a'\ \int\!\!\!\int_{M^*}\, \dd^d k_a\ \dd^d k_a'\
  \e^{-\ii (k_a\cdot z_a'-k_a'\cdot z_a)} \nonumber\\
&& \qquad \qquad \times \ \e^{\ii(k_1+\cdots+k_{n-1}-k'_{n-1})\cdot x}\,
\e^{-\ii k'_{n-1}\cdot(z_1+\cdots + z_{n-2})}\nonumber \\ 
&& \qquad \qquad
\qquad \times \e^{\ii
  k_a\cdot (z_{1}+\cdots+z_{a-1})} \ \prod_{b=2}^n\,
f_b(z'_{n-b+1}) \ f_1(z_{n-1}\,) \nonumber \\
 && \qquad \qquad \qquad \qquad \times \ 
\exp\Big(-\frac{\ii\ell_s^4}{12}\ \sum_{c=1}^{n-1}\,
R(k_c,k_c',k_{1}+\cdots + k_{c-1})\Big) \nonumber\\
&& \qquad \qquad \qquad \qquad \qquad \qquad \times \
\exp\Big(-\frac{\ii\ell_s^4}{6\hbar}\ \sum_{c=1}^{n-1}\, \theta_{\bar
  p}(k_c,k_c'\, ) \Big) \ ,
\eea
where we relabelled 
$z'_{n-b+1}\to z'_{n-b+1}-(x+z_1+\cdots + z_{n-b})$ for $2\leq b\leq
n$, and $z_{n-1}\to z_{n-1}-(x+z_1+\cdots+z_{n-2})$. Integrating over
$z_a$ for $a=1,\dots,n-2$ gives delta-function constraints
\beq
k'_a=k'_{n-1}-(k_{a+1}+\cdots +k_{n-1}) \qquad \mbox{for} \quad a=1,\dots,n-2
\ .
\eeq
Integrating over $z'_a$ for $ a=1,\dots,n-1$ and over $z_{n-1}$, and
after relabelling $k_1\to k_n$, $k'_{n-1}\to -k_1$ and $k_{n-b+1}'\to k_{b}$ for $b=2,\dots,n$, we then get
\bea
\mu_{\bar p}^{(n)}(f_1\otimes\cdots\otimes f_n) (x) 
&& \!\!\!\!\!\!\!= \ \prod_{a=1}^{n}\
\int_{M^*}\, \dd^d k_a\ \hat f_a(k_a) \ \e^{\ii (k_1+\cdots+k_n)\cdot x} \nonumber \\
&& \qquad \times \ 
\exp\Big(\frac{\ii\ell_s^4}{12}\, 
\sum_{c=1}^{n-1}\, R(k_{n-c+1},k_1+\cdots 
+k_{n-c},k_{n-c+2}+\cdots +k_n)
\Big) \nonumber\\
&& \qquad \qquad \times \
\exp\Big(\frac{\ii\ell_s^4}{6\hbar}\ \sum_{c=1}^{n-1}\, \theta_{\bar
  p}(k_{n-c+1},k_1+\cdots 
+k_{n-c} \, ) \Big) \nonumber\\
&& \!\!\!\!\!\!\!=  \ \prod_{a=1}^{n}\
\int_{M^*}\, \dd^d k_a\ \hat f_a(k_a) \ \e^{\ii (k_1+\cdots+k_n)\cdot x} \nonumber \\
&& \qquad \times \ 
\exp\Big(-\frac{\ii\ell_s^4}{12}\, 
\sum_{c=1}^{n-1} \ \sum_{a=1}^{n-c} \ \sum_{~b=n-c+2}^n\, R(k_a,k_{n-c+1},k_b)
\Big)
\nonumber\\
&& \qquad \qquad \times \
\exp\Big(-\frac{\ii\ell_s^4}{6\hbar}\
\sum_{c=1}^{n-1} \ \sum_{a=1}^{n-c}\, 
\theta_{\bar p}(k_a,k_{n-c+1} \, ) \Big) 
\eea
where we used  multilinearity and antisymmetry of the trivector and
bivector terms.
Rewriting the summations we finally
arrive at
\bea
\mu_{\bar p}^{(n)}(f_1\otimes\cdots\otimes f_n)(x) &=& \prod_{a=1}^{n}\
\int_{M^*}\, \dd^d k_a\ \hat f_a(k_a) \ \e^{\ii (k_1+\cdots+k_n)\cdot x} \nonumber \\
&& \qquad \qquad \times \ 
\exp\Big(-\frac{\ii\ell_s^4}{12}\, \sum_{1\leq a<e<b\leq n}\,
R(k_a,k_e,k_b) 
\Big) \nonumber \\
&& \qquad \qquad \qquad \qquad \times \ \exp\Big(-\frac{\ii\ell_s^4}{6\hbar}\, \sum_{1\leq a<b\leq
  n}\, \theta_{\bar p}(k_a,k_b) \Big) \ ,
\label{eq:ntriprodmom}\eea
which is the asserted result written using the momentum space
representation for the $\As_{\bar p}$ product and the $R$-flux
differential operators.
\end{proof}

Besides the fundamental 2-products $\mu_{\bar p}^{(2)}$, which we
have seen coincide with the Moyal--Weyl star products $\As_{\bar p}$,
the basic triproducts are given by
\beq\label{basic3trip}
\mu_{\bar p}^{(3)}(f\otimes g\otimes h) = \As_{\bar
  p}\Big(\exp\big(\, \mbox{$\frac{\ell_s^4}{12}$} \, 
R^{ijk}\, \partial_i\otimes \partial_j\otimes \partial_k\, \big)
(f\otimes g\otimes h)\Big) \ .
\eeq
The coordinate space commutator in (\ref{luestalg}) is then reproduced
by the commutator bracket for $\mu_{\bar p}^{(2)}$, while the 3-bracket (\ref{eq:3bracket}) is reproduced by
$[x^i,x^j,x^k]_{\bar p}$, where
\beq
[f_1,f_2,f_3]_{\bar p}:= \sum_{\sigma\in S_3}\, (-1)^{|\sigma|}\,
\mu_{\bar p}^{(3)}(f_{\sigma(1)}\otimes f_{\sigma(2)}\otimes f_{\sigma(3)})
\eeq
with $S_3$ the group of permutations of the set $\{1,2,3\}$.
The general $n$-triproducts satisfy the reduction properties
\beq
\mu_{\bar
  p}^{(n)}\big(f_1\otimes\cdots\otimes(f_i\!=\!1)\otimes\cdots\otimes
f_{n} \big)\,=\,\mu_{\bar 
  p}^{(n-1)}\big(f_1\otimes\cdots\otimes \widehat{f_i}\otimes\cdots
\otimes f_{n} \big) 
\eeq
where $\widehat{f_i}$ denotes omission of $f_i$, $i=1,\ldots, n$; these reductions consistently yield
\beq
\mu_{\bar p}^{(3)}(f\otimes g\otimes 1) = 
\mu_{\bar p}^{(3)}(f\otimes 1\otimes g)=\mu_{\bar p}^{(3)}(1\otimes
f\otimes g)=
\mu_{\bar p}^{(2)}(f\otimes g)=f\, \As_{\bar p}\,  g \ . 
\eeq
However, for $n>2$ the $n$-triproducts $\mu_{\bar p}^{(n)}$
cannot be defined by iteration from $m$-triproducts with
$m<n$. This is in contrast to the precursor definition
(\ref{eq:ntriproddef}) in terms of phase space star products, and also
to the $n$-star products which are defined in (\ref{eq:nmoyaldef})
by iteration of the Moyal--Weyl star products. 

One of the most distinctive features of the $n$-triproducts is 
their trivialization on-shell, i.e., after integration over $M$.
\begin{proposition}
 \ $\displaystyle{
\int_M\, \dd^d x\ \mu_{\bar p}^{(n)}(f_1\otimes \cdots\otimes f_n) = \int_M\, \dd^d x\ f_1 \,\As_{\bar
  p}\, \cdots \, \As_{\bar p}\,  f_n \ .
}$
\label{prop:onshell}\end{proposition}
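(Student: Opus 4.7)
The plan is to reduce the statement to a clean combinatorial identity by exploiting the explicit momentum-space representation of $\mu^{(n)}_{\bar p}$ established in Proposition~\ref{prop:ntriproduct} (formula~(\ref{eq:ntriprodmom})). Integrating that formula over $x\in M$ replaces the plane wave $\e^{\ii(k_1+\cdots+k_n)\cdot x}$ by the momentum conservation distribution $(2\pi)^d\,\delta^d(k_1+\cdots+k_n)$. On the support of this distribution, the $\theta_{\bar p}$-phase factor $\exp\!\bigl(-\tfrac{\ii\ell_s^4}{6\hbar}\sum_{a<b}\theta_{\bar p}(k_a,k_b)\bigr)$ is exactly the momentum-space kernel of the iterated Moyal--Weyl product $f_1\,\As_{\bar p}\cdots\As_{\bar p}\,f_n$ defined in~(\ref{eq:nmoyaldef}). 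Hence the entire proposition reduces to showing that the remaining $R$-flux phase factor is unity whenever momentum is conserved, that is
\[
T \,:=\, \sum_{1\leq a<b<c\leq n}\,R(k_a,k_b,k_c) \,=\, 0 \qquad \text{whenever} \quad S\,:=\,\sum_{a=1}^n k_a \,=\, 0 \ .
\]

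To establish this identity I would use a symmetrization trick rather than a direct substitution. Consider the auxiliary sum
\[
U \,:=\, \sum_{1\leq b<c\leq n}\,R(S,k_b,k_c) \,=\, \sum_{a=1}^n\ \sum_{1\leq b<c\leq n}\,R(k_a,k_b,k_c) \ ,
\]
which vanishes identically because $S=0$. Splitting the $a$-sum by the position of $a$ relative to $b$ and $c$: the range $a<b<c$ contributes $+T$; the range $b<a<c$ contributes $-T$ via $R(k_a,k_b,k_c)=-R(k_b,k_a,k_c)$ and a relabelling; the range $b<c<a$ contributes $+T$ via the cyclic (hence even) identity $R(k_a,k_b,k_c)=R(k_b,k_c,k_a)$ and a relabelling; the coincident cases $a=b$ or $a=c$ vanish by total antisymmetry of $R^{ijk}$. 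Adding these gives $0=U=T-T+T=T$, so $T=0$ as required.

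With the identity in hand the proposition follows immediately: after integration over $x\in M$, the $R$-flux phase factor in~(\ref{eq:ntriprodmom}) is trivial on the support of the delta function, leaving precisely the momentum-space integral representation of $\int_M \dd^d x\,(f_1\,\As_{\bar p}\cdots\As_{\bar p}\,f_n)$, and in particular the result is independent of $\bar p$ in the sense that only the Moyal--Weyl data survives. The main obstacle is the combinatorial identity itself, which is not transparent at first sight; the low-order cases $n=3$ (where $T=R(k_1,k_2,-k_1-k_2)=0$ is immediate from antisymmetry) and $n=4$ (where the four terms collapse pairwise under $k_n=-\sum_{a<n}k_a$) are useful sanity checks before committing to the general auxiliary-sum argument, and they also make it clear why the ordering of bracketings chosen in~(\ref{eq:ntriproddef}) plays no role at the level of the integrated product.
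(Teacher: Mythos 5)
Your proposal is correct and follows the same overall strategy as the paper: integrate the momentum-space representation (\ref{eq:ntriprodmom}) over $x$ to produce the momentum-conservation delta function, observe that the surviving $\theta_{\bar p}$-phase is exactly the kernel of the iterated $\As_{\bar p}$ product, and reduce everything to showing that $T=\sum_{1\leq a<b<c\leq n}R(k_a,k_b,k_c)$ vanishes on the shell $\sum_a k_a=0$. The only point of divergence is how that combinatorial identity is established. The paper resums over the middle index and uses multilinearity to write $T=\sum_{b=2}^{n-1}R\big(\sum_{a<b}k_a\,,\,k_b\,,\,\sum_{c>b}k_c\big)$, after which each summand dies term by term once $\sum_{c>b}k_c=-\sum_{a<b}k_a-k_b$ is substituted and total antisymmetry is invoked. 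Your symmetrization of the first slot via $U=\sum_{b<c}R(S,k_b,k_c)=0$ and the parity count $T-T+T=0$ reaches the same conclusion; it is a valid argument (the case analysis over the position of $a$ and the vanishing of the coincident terms are all correct), but it requires a three-way case split and relabelling where the paper's resummation kills each $b$-term outright with no cases. Both proofs are elementary and complete; your closing remark that the identity is insensitive to the choice of bracketing at the integrated level is consistent with the paper's later discussion in Section~3.2, where it is shown that only those rebracketings reached by associators acting on \emph{all} tensor factors preserve the integrated product.
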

\begin{proof}
Using the momentum space integral formula (\ref{eq:ntriprodmom}) for
the $n$-triproduct, we have
\beq
\int_M\, \dd^d x\ \mu_{\bar p}^{(n)}(f_1\otimes\dots\otimes f_n) &=& (2\pi)^d\
\prod_{a=1}^{n}\
\int_{M^*}\, \dd^d k_a\ \hat f_a(k_a) \ \delta(k_1+\cdots+k_n) \nonumber \\
&& \qquad \qquad \qquad \times \ 
\exp\Big(- \frac{\ii\ell_s^4}{12}\, \sum_{1\leq a<b<c\leq n}\,
R(k_a,k_b,k_c) 
\Big) \\
&& \qquad \qquad \qquad \qquad \qquad \times \ \exp\Big(-\frac{\ii\ell_s^4}{6\hbar}\, \sum_{1\leq a<b\leq
  n}\, \theta_{\bar p}(k_a,k_b) \Big) \ , \nonumber
\eeq
where the delta-function enforcing momentum conservation arises by
translation-invariance on $M$ of the definition
(\ref{eq:ntriproddef}). Using multilinearity
of the trivector $R$ we have
\bea
\sum_{1\leq a<b<c\leq n}\,
R(k_a,k_b,k_c) = \sum_{b=2}^{n-1} \ \sum_{a=1}^{b-1} \
\sum_{c=b+1}^{n}\, R(k_a,k_b ,k_c) = \sum_{b=2}^{n-1}\, R\Big(\, \mbox{$\sum\limits_{a=1}^{b-1}\, 
k_a\,,\,k_b\,,\,\sum\limits_{c=b+1}^{n}\, k_c$} \, \Big) \ ,
\eea
and each term in the sum over $b$ vanishes by antisymmetry of $R$
after imposing momentum conservation $\sum_{a=1}^n \, k_a=0$. 
\end{proof}

From Proposition~\ref{prop:onshell} it follows that the leaf of zero momentum
is singled out by the feature that on-shell there
are no signs of noncommutativity or nonassociativity. On the special
slice $\bar p=0$ the
2-product
\beq
\mu_0^{(2)}(f\otimes g) = f\, g
\eeq
is the ordinary multiplication of functions, while for $n\geq3$ the
expression for the $n$-triproduct from
Proposition~\ref{prop:ntriproduct} becomes
\beq
\mu_{0}^{(n)}(f_1\otimes\cdots\otimes f_n) = \mu\Big[\exp\Big(\, \frac{\ell_s^4}{12}\ \sum_{1\leq a<b<c\leq n}\,
R^{ijk}\, \partial_i^a\, \partial_j^b\, \partial_k^c\, \Big)
(f_1\otimes\cdots\otimes f_n)\Big] \ .
\eeq
These products coincide with the sequence of $n$-triproducts for $n\geq2$ that
was proposed by~\cite{Blumenhagen2011} from an analysis of off-shell
closed
string tachyon amplitudes in the toroidal flux model to linear order
in the flux components $R^{ijk}$. By Proposition~\ref{prop:onshell} they obey
the on-shell condition
\beq
\int_M\, \dd^d x\ \mu_{0}^{(n)}(f_1\otimes\cdots\otimes f_n) = \int_M\, \dd^d
x\ f_1 \cdots  f_n \ ,
\eeq
and hence all traces of nonassociativity disappear in closed string
scattering amplitudes. Here we have reproduced this conjectural
triproduct to all orders in $R^{ijk}$ from the all orders phase space
star product derived by~\cite{Mylonas2012}. The correspondence between
the phase space star product restricted to functions on $M$ and the
triproducts of~\cite{Blumenhagen2011} was already noted
by~\cite{Mylonas2012} for $n=2,3$ (see also~\cite{Bakas2013}); here we
have extended and generalised the correspondence to all $n>3$.

The triproducts for $\bar p\neq0$ lead to on-shell correlation
functions which are cyclically invariant, by cyclicity of the
Moyal--Weyl products $\As_{\bar p}$ for all $\bar p$. In this case the
scattering amplitudes resemble those of \emph{open} strings on a
D-brane with two-form $B$-field inverse to the bivector $\theta_{\bar
  p}$. The appearence of the more general triproducts $\mu_{\bar
  p}^{(n)}$ is natural from the perspective of the open/closed string
duality described in~\cite[Section~2.4]{Mylonas2012}, which is a
crucial ingredient in the derivation of the nonassociative phase space
star product from closed string correlation functions; in fact,
in~\cite[Appendix]{Lust2010} it is argued that closed string momentum
and winding modes define a certain notion of D-brane in closed string theory. Moreover, they
are natural in light of the momentum space noncommutative gerbe
structure of the nonassociative phase space description of the
parabolic $R$-flux background and the closed string Seiberg--Witten maps relating
associative and nonassociative theories, as described
in~\cite[Section~3.4]{Mylonas2012}.  Hence in the following we work
with the general leaves of constant momentum in order to retain as much
of the (precursor) phase space nonassociative geometry on $M$ as
possible in our analysis, with the understanding that bonafide closed string
scattering amplitudes require setting $\bar p=0$ at the end of the
day. We can interpret this distinction in the following way: If we
regard the nonassociative field theory constructed on phase space as
the fundamental field theory of closed strings in the non-geometric
$R$-flux frame (as was done in~\cite{Mylonas2012} and as we do
throughout in the present paper), with the leaves of constant momentum
being the on-shell physical sectors, then integrating over momenta
traces out extra degrees of freedom leading to violation of
associativity. From this perspective localizing on the $\bar p=0$ leaf
corresponds to working on a closed string vacuum as a lowest
order approximation to the theory, while higher (M2-brane) excitations involve fluctuations out of the $\bar p=0$ leaf and probe the whole structure of the full nonassociative field theory. 

\subsection{Association relations}

In our definition (\ref{eq:ntriproddef}) we used a particular
bracketing for the star product of $n$ functions on phase space. As
this star product is nonassociative, it is natural to ask what happens
when one chooses different orderings. It was shown by~\cite{Mylonas2013}
that the different choices of associating
the functions is controlled by an associator, which can
be described by a tridifferential operator
\beq\label{phifgh}
\Phi=\exp\big(\mbox{$\frac{\ell_s^4}6$} \,
R^{ijk}\, \partial_i\otimes \partial_j\otimes\partial_k\big) 
\label{assoctridiff}~.\eeq
The associator relates the two different products of three functions on phase space
$(\,\underline{f}\star \underline{g}\,)\star \underline{h}$ and
$\underline{f}\star (\,\underline{g}\star \underline{h}\,)$.
If we introduce the notation
\eq
\Phi(\,\uf\otimes\ug\otimes\uh\,)=:\uf^\phi\otimes\ug^\phi\otimes\uh^\phi
\ee
where summation over the index $\phi$ is understood, then
the relation is
\eq
(\,\uf\star\ug\,)\star \uh=\uf^\phi\star(\,\ug^\phi\star\uh^\phi\,)~. \label{id329}
\ee
We can write this relation more algebraically as
\beq
\mu_\star\circ(\mu_\star\otimes\Id)=\mu_\star\circ(\Id\otimes\mu_\star)
\circ\Phi
\ ,
\label{algid329}\eeq
where $\mu_\star(\,\uf\otimes \ug\,):=\uf\star\ug$ with the linear maps
\beq
\mu_\star\circ(\mu_\star\otimes\Id):(\alg\otimes 
\alg)\otimes  \alg\longrightarrow 
\alg \qquad \mbox{and} \qquad
\mu_\star\circ(\Id\otimes\mu_\star): \alg\otimes (
\alg\otimes \alg)\longrightarrow 
\alg \ .
\eeq
Here we regard the associator as a linear map
\eq\label{asscat}
\Phi:(\alg\otimes \alg)\otimes \alg\longrightarrow \alg\otimes
(\alg\otimes \alg) \qquad \mbox{with} \quad (\,\uf\otimes \ug\,)\otimes
\uh\stackrel{\Phi} \longmapsto \uf^\phi\ot(\,\ug^\phi\ot\uh^\phi\,) \ ,
\ee
where the parentheses emphasize that on the left-hand side the product $\mu_\star\circ(\mu_\star\otimes\Id)$ naturally
acts while on the right-hand side the
product $\mu_\star\circ(\Id\otimes \mu_\star)$ acts.  Applying these
natural products to (\ref{asscat}) yields the identity (\ref{id329}).

The operator $\Phi$ is a 3-cocycle in the Hopf algebra of translations
and Bopp shifts 
in phase space $\Mcal$ that we will discuss in
Section~\ref{sec:cochain}; this means that it obeys the pentagon
relations which states that the two possible ways of reordering the
brackets from left to right 
$\big((\, \alg\, \otimes\, \alg\, )\otimes\,
\alg\, \big)\otimes\, \alg\rightarrow  \alg\, \otimes \big(\, \alg\, \otimes
(\, \alg\, \otimes\, \alg\, ) \big) $ are equivalent, i.e.,
 the diagram
\beq
\xymatrix{
 & (\, \alg\, \otimes\, \alg\, )\otimes(\,
 \alg\, \otimes\, \alg\, ) \ar[ddr]^{\,\,\,\,\Phi_{\,\alg\,,\,\alg\,,\,
     \alg\,\otimes\, \alg}}
 & \\
 & & \\\ar[uur]^{\Phi_{\alg\, \otimes \,\alg\,,\,
     \alg\,,\, \alg\,\,\, }\,\,\,}
\big((\, \alg\, \otimes\, \alg\, )\otimes\,
\alg\, \big)\otimes\, \alg\ar[dd]_{\Phi_{\,\alg\,,\,\alg\,,\, 
     \alg}\, \otimes\, \Id} 
   & & \alg\, \otimes \big(\, \alg\, \otimes
(\, \alg\, \otimes\, \alg\, ) \big) \\
 & & \\
\big(\, \alg\, \otimes (\, \alg\, \otimes\,
\alg\,) \big)\otimes\, \alg\ 
\ar[rr]^{\Phi_{\,\alg\,,\,\alg\, \otimes \,
     \alg\,,\, \alg}}& & \ \alg\, \otimes
   \big(( \, \alg\, \otimes\,
\alg\,) \otimes\, \alg\,  \big) \ar[uu]_{\Id\, \otimes
  \, \Phi_{\,\alg\,,\,
     \alg\,,\, \alg}} \!\!\!\!\!\!\!\!
}
\label{eq:pentagon}
\eeq
commutes.
This means that we can rewrite the products $((\,\uf\star\ug\,
)\star\uh\, )\star\uk$ as linear
combinations of products of the kind
$\uf'\star(\, \ug'\star(\, \uh'\star\uk'\, ))$. There are \emph{a priori} two different
linear combinations, respectively obtained by following the upper and
lower paths in the
diagram (\ref{eq:pentagon}). Commutativity of the diagram asserts
that these two paths are equivalent.
The action of the differential operator $\Phi_{\alg\otimes \alg\,,\,  \alg\,,\,
  \alg}$ on $(\, \uf\ot \ug\, )\ot \uh\ot \uk$ is inherited from the Leibniz rule
$\partial_i(\,\uf\ot \ug\,)=\partial_i\,\uf\ot \ug+\uf\ot \partial_i\,
\ug$ and it reads as
\eq
\Phi_{\alg\otimes \alg\,,\,  \alg\,,\, \alg}\big((\, \uf\ot\ug\, )\ot\uh\ot \uk\, \big)= \exp\big(\mbox{$\frac{\ell_s^4}{6}$}\, R^{ijk}\,
(\partial_i^1\, \partial_j^3\, \partial_k^4+
\partial_i^2\,\partial_j^3\,\partial_k^4) \big)(\, \uf\otimes
\ug\otimes \uh\otimes \uk\, ) \ .
\ee
If  we apply the product
$\mu_\star\circ(\mu_\star\otimes\mu_\star)$ to this expression we
obtain
\beq
\mu_\star\circ(\mu_\star\otimes\mu_\star)\circ \Phi_{\alg\ot
  \alg\,,\,\alg\,,\,\alg}\big((\,\uf\ot \ug\, )\ot \uh\ot \uk\, \big)
=\big((\,\uf\star\ug\,)\star\uh\, \big)\star\uk \ ,
\eeq
as is read off from the upper arrow in (\ref{eq:pentagon}). The differential operators $\Phi_{\alg\,,\,  \alg\,,\,
  \alg\ot \alg}$ and  $\Phi_{\alg\,,\,  \alg\ot \alg\,,\alg}$ are similarly defined.

The associator $\Phi$ also relates the two inequivalent triproducts of
functions on  $M$ given by
$s_{\bar p}^*\big[(\pi^*f\star\pi^*g) \star\pi^*h\big]$ and 
$s_{\bar  p}\big[\pi^*f\star(\pi^*g\star \pi^*h)\big] $.
In order to present an explicit expression for the product $s_{\bar
  p}\big[\pi^*f\star(\pi^*g\star \pi^*h)\big] $ we observe that there
is an obvious one-to-one correspondence between differential operators
$\partial_i$ on $\alg= C^\infty({\cal M})$ and on $A=C^\infty({M})$, and that for
any function $f$ on $M$ one has
$s_{\bar p}(\partial_i(\pi^\ast f))=\partial_i f$. This implies that the
associator $\Phi$ naturally acts as a differential operator on
$A\ot A\ot A$. 
We define
\eq
\Phi^{-1}(\, \uf\otimes\ug\otimes\uh\,
)=\uf^{\bar{\phi}}\otimes\ug^{\bar{\phi}}\otimes\uh^{\bar{\phi}}
\qquad \mbox{
  and } \qquad \Phi^{-1}(f\otimes g\otimes h)= f^{\bar{\phi}}\otimes
g^{\bar{\phi}}\otimes h^{\bar{\phi}}
\ee
with implicit summation as before, and then we find
\bea
s^\ast_{\bar  p}\big[\big(\pi^*f\star(\pi^*g\star
\pi^*h)\big)\big]&=& 
s_{\bar p}^*\big[\big((\pi^*f)^{\bar\phi}\star (\pi^*g)^{\bar\phi}
\big)\star (\pi^*h)^{\bar\phi}\,\big]\nonumber \\[4pt]
&=& 
s_{\bar p}^*\big[\big((\pi^*f^{\bar\phi})\star (\pi^*g^{\bar\phi})
\big)\star (\pi^*h^{\bar\phi}\, )\big]\nonumber \\[4pt]
&=&
\mu_{\bar p}^{(3)}\big(f^{\bar\phi}\otimes g^{\bar\phi}\otimes
h^{\bar\phi}\, \big) \label{muphi-1}\\[4pt]
&=& \As_{\bar
  p}\Big(\exp\big(\, \mbox{$\frac{\ell_s^4}{12}$} \, 
R^{ijk}\, \partial_i\otimes \partial_j\otimes \partial_k\, \big) \,
\Phi^{-1}(f\otimes g\otimes h)\Big) \nonumber \\[4pt]
&=& \As_{\bar
  p}\Big(\exp\big(-\mbox{$\frac{\ell_s^4}{12}$} \, 
R^{ijk}\, \partial_i\otimes \partial_j\otimes \partial_k \big)
(f\otimes g\otimes h)\Big) \ .\nn
\eea
Thus we can generate the new triproduct $s^*_{\bar  p}\big[\big(\pi^*f\star(\pi^*g\star
\pi^*h)\big)\big] $ of
fields by applying the inverse of the associator $\Phi$ directly to functions
on $M$ as in (\ref{muphi-1}). This new triproduct also obeys
Proposition~\ref{prop:onshell}, since from  (\ref{muphi-1}) we have
\beq\label{inttriv}
\int_M\, \dd^dx ~s^\ast_{\bar  p}\big[\big(\pi^*f\star(\pi^*g\star
\pi^*h)\big)\big]
=
\int_M\,\dd^dx~\mu_{\bar p}^{(3)} \big[\Phi^{-1} (f\ot g\ot h) \big] =
\int_M\, \dd^dx\ f\, \As_{\bar p}\, g\, \As_{\bar p}\, h \ .
\eeq
A quick way to prove this is to note that
\beq
\Phi^{-1}\big(\e^{\ii k_1\cdot x}\otimes \e^{\ii k_2\cdot x}\otimes
\e^{\ii k_3\cdot x}\big) = \e^{\frac{\ii\ell_s^4}{6}\,
  R(k_1,k_2,k_3)} \ \big( \e^{\ii k_1\cdot x}\otimes \e^{\ii k_2\cdot x}\otimes
\e^{\ii k_3\cdot x} \big) \ ,
\eeq
and the extra phase factor is unity after imposing momentum
conservation $k_1+k_2+k_3=0$.

We can now generate all induced 4-triproducts from the
diagram (\ref{eq:pentagon}). For example one has
\bea
s^*_{\bar p}\big[(\pi^* f\star\pi^*
g)\star(\pi^*h\star\pi^*k)] &=& \mu^{(4)}_{\bar p}\big[\Phi^{-1}_{A\ot
A\,, \,A\,,\,A}\big( (f\ot g)\ot h\ot k\big)\big] \\[4pt]
&=&\mu^{(4)}_{\bar p}
\big[\exp\big(-\mbox{$\frac{\ell_s^4}{6}$}\, R^{ijk}\,
(\partial_i^1\, \partial_j^3\, \partial_k^4+
\partial_i^2\,\partial_j^3\,\partial_k^4)\big)(f\otimes
g\otimes h\otimes k)\big)\big] \nn\\[4pt]
&=&\As_{\bar p}
\big[\exp\big(\mbox{$\frac{\ell_s^4}{12}$}\, R^{ijk}\,
(\partial_i^1\, \partial_j^2\, \partial_k^3+
\partial_i^1\,\partial_j^2\,\partial_k^4
\nn\\ && \qquad \qquad \qquad \qquad -\, \partial_i^1\, \partial_j^3\, \partial_k^4-
\partial_i^2\,\partial_j^3\,\partial_k^4)\big) (f\otimes
g\otimes h\otimes k)\big]  \ . \nn
\eea
A completely analogous calculation following both upper arrows in the
diagram (\ref{eq:pentagon}) gives
\bea\label{eq:4triprodop}
s^*_{\bar p}\big[\pi^* f\star \big(\pi^*
g\star(\pi^*h\star\pi^*k)\big) \big] &=& \mu^{(4)}_{\bar p}\big[\Phi^{-1}_{A\ot
A\,, \,A\,,\,A}\,\Phi^{-1}_{A\,,\,A\,, \,A\ot A}\big( f\ot g\ot (h\ot
k)\big)\big] \\[4pt]
&=&\As_{\bar p}
\Big[\exp\Big(-\frac{\ell_s^4}{12} \, R^{ijk}\, \sum_{1\leq
a<b<c\leq 4}
\partial_i^a\, \partial_j^b\, \partial_k^c \,\Big)(f\otimes
g\otimes h\otimes k)\Big] \ . \nn
\eea
These new 4-triproducts serve just as well for describing the products
among off-shell
closed string tachyon vertex operators, and indeed we find as in
Proposition~\ref{prop:onshell} the on-shell result
\bea
\int_M\, \dd^d x \ s^*_{\bar p}\big[(\pi^* f\star\pi^*
g)\star(\pi^*h\star\pi^*k) \big] &=&
\int_M\, \dd^d x \ \mu^{(4)}_{\bar p}\big[\Phi^{-1}_{A\otimes A\,,\, A\,,\, A}
\big((f\otimes g)\ot h\ot k\big)\big]\nn\\[4pt] &=& 
\int_M\, \dd^d x  \ \As_{\bar p}\big[\Phi^{-1}_{A\otimes A\,,\, A\,,\, A}
\big((f\otimes g)\ot h\ot k\big)\big]\nn\\[4pt]  &=& 
\int_M \, \dd^d x  \;\As_{\bar p}\big[\Phi^{-1}
\big((f\, \As_{\bar p}\, g)\ot h\ot k \big)\big] \nn\\[4pt]   &=&
\int_M\, \dd^d x \ f \, \As_{\bar p}\, g\,
\As_{\bar p}\, h\, \As_{\bar p}\, k
\label{eq:on-shell-tachyon}\eea
where in the last equality we used (\ref{inttriv}). Letting
$R^{ijk}\to -R^{ijk}$ in Proposition~\ref{prop:onshell} we also
immediately see that the integral of (\ref{eq:4triprodop}) coincides
with (\ref{eq:on-shell-tachyon}).

The situation is notably different if one follows the left vertical
arrow in the diagram (\ref{eq:pentagon}). We first compute the product
\bea
s^*_{\bar p}\big[\big(\pi^* f\star (\pi^*
g\star\pi^*h) \big)\star\pi^*k\big] &=&
\mu_{\bar p}^{(4)}\big[\Phi^{-1}_{A\,,\,A\ot A\,,\, A} \big( f\ot (g\ot
h)\ot k\big) \big]\\[4pt]
&=& \As_{\bar p}\Big[\exp\big(\, \mbox{$\frac{\ell_s^4}{12}$}\, R^{ijk}\,
(-\partial_i^1\, \partial_j^2\, \partial_k^3+
\partial_i^1\,\partial_j^2\, \partial_k^4 \nn \\ 
&& \qquad \qquad +\, \partial_i^1\, \partial_j^3\, \partial_k^4
+ \partial_i^2\, \partial_j^3\, \partial_k^4) \big)(f\otimes
g\otimes h\otimes k)\Big] \ . \nonumber
\label{eq:1Phighk}\eea
Analogously we find for the product in the bottom right corner of
(\ref{eq:pentagon}) as
\bea
s^*_{\bar p}\big[\pi^* f\star \big((\pi^*
g\star\pi^*h)\star\pi^*k \big)\big] &=&
\mu_{\bar p}^{(4)}\big[(\Phi^{-1}\ot \Id)\,\Phi^{-1}_{A\,,\,A\ot A\,,\, A}\big( f\ot (g\ot h)\ot k\big)\big]\\[4pt]
&=&
\As_{\bar p}\Big[\exp\big(\mbox{$\frac{\ell_s^4}{12}$}\, R^{ijk}\,
(-\partial_i^1\, \partial_j^2\, \partial_k^3-
\partial_i^1\, \partial_j^2\, \partial_k^4 \nn\\ 
&&\qquad \qquad ~-\, \partial_i^1\, \partial_j^3\, \partial_k^4
+\partial_i^2\, \partial_j^3\, \partial_k^4) \big)(f\otimes
g\otimes h\otimes k)\Big] \ ,  \nonumber
\eea
and it is easy to see that a further application of
$\Id\ot \Phi^{-1}$ leads exactly to  the 4-triproduct
(\ref{eq:4triprodop}). As previously the application of
$\Phi^{-1}_{A\,,\,A\ot A\,,\, A}$ does not alter closed string amplitudes,  but
the application of $\Phi^{-1}\ot \Id$ does, and
we find
\bea
\int_M\, \dd^d x~ s^*_{\bar p}\big[\big(\pi^* f\star (\pi^*
g\star\pi^*h) \big)\star\pi^*k\big]&=& 
\int_M\, \dd^d x  ~\mu_{\bar 
  p}^{(4)}\big[ \Phi^{-1} (f\ot g\ot h)\ot
k \big]\\[4pt]
 &=& 
\int_M\, \dd^d x  ~\mu_{\bar 
  p}^{(4)}\big( f^{\bar\phi}\ot g^{\bar\phi}\ot h^{\bar\phi}\ot
k\big)\nn\\[4pt]
&=& \int_M\, \dd^d x~s^*_{\bar p}\big[\pi^* f\star \big((\pi^*
g\star\pi^*h)\star\pi^*k \big)\big] ~.\nn
\eea
In general one has
\bea
\int_M\, \dd^d x \ \mu_{\bar 
  p}^{(4)} \big( f^{\bar\phi}\ot g^{\bar\phi}\ot h^{\bar\phi}\ot
k\big)&=& \int_M \, \dd^d x \ f^{\bar\phi}\, \As_{\bar p} \,
g^{\bar\phi}\, \As_{\bar
  p}\, h^{\bar\phi}\, \As_{\bar p }\, k \nn \\[4pt] &\not=& \int_M\, \dd^d x \ f\,
\As_{\bar p} \, g\, \As_{\bar
  p} \, h\, \As_{\bar p }\, k \ .
\eeq
As a consequence these two 4-triproducts leave traces of
nonassociativity through additional interaction terms, and should hence
not be considered as physically viable products
among off-shell
closed string tachyon vertex operators.

This line of reasoning can be extended to all $n$-triproducts with
$n>4$. By MacLane's coherence theorem, all possible bracketings of
star products of $n$ functions on $\Mcal$ are related by
successive applications of the inverse associator $\Phi^{-1}$ to
tensor products of these $n$ functions.
 In all there
are $C_{n-1}$ star products, where $C_n$ is the Catalan
number of degree $n$. Of these there are $n-1$ triproducts which can serve as
physical products among closed string vertex operators. 
They are obtained from the bracketings
\beq 
\big((\cdots((\,\underline{f}\,_1\star \,\underline{f}\,_2)\star \,\underline{f}\,_3)\star\cdots )\star 
\,\underline{f}\,_r\big)\star\big(\,\underline{f}\,_{r+1}\star (\,\underline{f}\,_{r+2}\star(\,\underline{f}\,_{r+3}\star(\cdots \star \,\underline{f}\,_n)\cdots))\big)\label{ntripbra}
\eeq
for $1\leq r\leq n-1$, which for $r\leq n-2$ are obtained by successive applications for
$s=r, r+1,\dots, n-3$ of the inverse associators
\beq
\Phi^{-1}_{(\cdots((\alg\otimes\alg)\otimes
\alg)\otimes\cdots) \otimes\alg\,,\,\alg\,,\,\alg\otimes(\alg\otimes(\alg\otimes(\cdots
  \otimes\alg)\cdots )) }
\label{eq:Phij}\eeq
to $((\cdots
(\,\underline{f}{}_1\otimes\,\underline{f}{}_2)\otimes\,\underline{f}{}_3)\otimes
\cdots
) \otimes \,\underline{f}{}_r)\otimes \underline{f}{}_{r+1}\otimes (\underline{f}{}_{r+2}\otimes(\underline{f}{}_{r+3}\otimes(\cdots\otimes\underline{f}{}_n)\cdots)))$, where the first slot of $\Phi^{-1}$ in
(\ref{eq:Phij}) contains $s$ tensor products of the algebra
$\alg=C^\infty(\Mcal)$. The action of the inverse associator
(\ref{eq:Phij}) is again obtained using the Leibniz rule for the
partial derivatives $\partial_i$ on $\alg^{\otimes s}$ and
$\alg^{\otimes{n-s-1}}$. It follows that for the specific associator we
are considering there is no ambiguity in rewriting the expression
(\ref{eq:Phij}) simply as
$\Phi^{-1}_{\alg^{\otimes s}, \,\alg\,,\,\alg^{\otimes n-s-1\,}}$. Its
momentum space representation is obtained by applying it to the tensor products of plane waves
\bea
&&  \!\!\!\!\!\!\!\!\!\!\!\!\! \!\!\!\!\!\!\!\!\!\!\!\!\!\big[ \big( \cdots (\e^{\ii k_{1}\cdot x}\otimes \e^{\ii
    k_{2}\cdot x})  \otimes \cdots \big)
  \otimes \e^{\ii k_s\cdot x}\big]\ \otimes \
  \e^{\ii k_{s+1}\cdot x} \\ 
&& \qquad \qquad \qquad \qquad \qquad \qquad ~~\otimes \ 
\big[\e^{\ii k_{s+2}\cdot
  x}\otimes\big( \e^{\ii k_{b+3}\cdot x} \otimes(
  \cdots \otimes \e^{\ii k_{n}\cdot x} )\cdots \big)\big] \nn
\eea
which yields the phase factor
\beq\label{phasefact}
\exp\big( \mbox{$\frac{\ii \ell_s^4}6$}\, R(k_1+\cdots+
k_{s}\,,\, k_{s+1}\,,\, k_{s+2}+\cdots+ k_n)\big)~.
\eeq

The $n$-triproducts (\ref{ntripbra}) on phase space ${\cal M}$ induce triproducts on
configuration space $M$ by setting $\underline{f}{}_i=\pi^*f_i$ for
$i=1,\dots,n$ and then
 pulling these products back along the local sections $s_{\bar  p}:M\to
 \Mcal$.  Their explicit expressions in terms of the Moyal-Weyl
 product and of the $R$-flux tridifferential operators is obtained by
 replacing the tridifferential operators of Proposition~\ref{prop:ntriproduct} with
\eq
\e^{R_{(r,n)}}:=\exp\bigg( \frac{\ell_s^4}{12}\, \sum_{1\leq
    a<b<c\leq n} R^{ijk}\,  \partial_i^a\,\partial_j^b\, \partial_k^c
\,-\, \frac{\ell_s^4}{6}\ \sum_{b=r+1}^{n-1} \ \sum_{1\leq a<b<c\leq
      n}\,  R^{ijk}\, \partial_i^a\, \partial_j^{b}\, \partial_k^c\,
   \bigg)\ .
\ee
These $n$-triproducts also obey the on-shell condition of
Proposition~\ref{prop:onshell} (as can be seen from the
vanishing of the phase in (\ref{phasefact}) after integrating over
$M$ which yields momentum conservation $k_1+\cdots+k_n=0$).

The remaining triproducts violate the on-shell condition because they
are obtained from the previous ones by applying those inverse associators that like the
vertical arrows in (\ref{eq:pentagon}) do not act on all the tensor products entries;
in this case total momentum conservation does not imply
the trivialization of their action.
For example one has
\bea
&&\!\!\!\!\!\!\!\!\!\!\!\int_M\, \dd^dx\ 
s^*_{\bar p}\Big[
\big(((\cdots((\,\pi^*\!f_1\star \,\pi^*\!f_2)\star \,\pi^*\!f_3)\star\cdots )\star \pi^*\!f_{r-2})\star 
(\pi^*\!f_{r-1} \star \pi^*\!f_r)\big)\nn\\
&&\qquad \qquad \qquad \qquad \qquad \qquad \qquad \qquad ~~\star\big(\,\pi^*\!f_{r+1}\star (\,\pi^*\!f_{r+2}\star(\,\pi^*\!f_{r+3}\star(\cdots \star \,\pi^*\!f_n)\cdots))\big)
\Big]\nn\\[4pt]
&&\qquad \qquad \qquad=\int_M\, \dd^dx\ \As_{\bar
  p}\Big[\e^{R_{(r,n)}}\, \big(
\Phi^{-1}_{\alg^{\otimes {r-2}},\,\alg\,,\,\alg}\otimes\Id^{\otimes n-r}\big)(f_1\otimes\cdots \otimes
f_{n})\Big]\nn\\[4pt]
&&\qquad\qquad \qquad=\int_M\, \dd^dx\ 
\As_{\bar p}\Big[\Phi^{-1}_{\alg^{\otimes
    {r-2}},\,\alg\,,\,\alg}(f_1\otimes\cdots \otimes
f_{r})\Big]\,\As_{\bar p} \,f_{r+1}\,\As_{\bar p}\, f_{r+2}\,\cdots\,\As_{\bar p}\, f_{n} \label{eq350}~.
\eea
Because of momentum conservation, all $n$-triproducts which
differ from that above by a sequence of inverse associators
acting non-trivially on each tensor product entry (like the horizontal or
diagonal arrows in (\ref{eq:pentagon})) yield the same
integral (\ref{eq350}). 
This procedure can be iterated to obtain all remaining
triproducts and their on-shell associativity violating amplitudes, but
we refrain from detailing further the combinatorics.

\newsection{Triproducts from phase space cochains\label{sec:cochain}}

\subsection{Motivation: Differential geometry of nonassociative $R$-space}

In order to extend the considerations of Section~\ref{sec:star} to
more general geometric entities, such as differential forms and vector
fields, we need to uncover the more algebraic
construction that underlies the explicit representation via Fourier
transformation that we used so far. Such a description of
the phase space nonassociative geometry was introduced
by~\cite{Mylonas2013}, while in~\cite{Barnes2014}  a general
categorical perspective on cochain twist deformation is presented.
Here we shall further develop the phase space nonassociative geometry,
recover the main results of Section~\ref{sec:star} in this algebraic
context, and then study
the exterior product of forms in nonassociative phase space; we
further derive the
induced products on the
leaves of constant momentum and their property under integration. We also study the
deformed Lie algebra of infinitesimal diffeomorphisms on phase space,
as well as its action on
exterior forms and tensor fields. These geometric structures are again
induced on each constant momentum leaf and in particular in
configuration space (the leaf of vanishing momentum). We thus derive the
configuration space geometry of exterior forms and vector fields,
together with their deformed Lie algebra and Jacobiator, at all orders in the $R$-flux
components $R^{ijk}$ from the geometry of phase space that we canonically
construct via 2-cochain twist deformation.

\subsection{Cochain twist deformations and nonassociative star products}\label{sec:cohainprod}

Consider the Lie algebra $\mfh$ with generators
$P_i,\tP^i,M^i$ for $i=1,\dots, d$ and relations
\beq
[\tP^i,M^j]=\mbox{$\frac{1}{6}$} \, \ell_s^4\, R^{ijk}\, P_k~,\label{PMRP}
\eeq
with all other Lie brackets vanishing. 
This Lie algebra naturally acts on
functions on phase space via the representation
\eq
P_i( \,\underline{f}\, ) := \p_i\, \underline{f} ~,\qquad \tP^i (\, \underline{f}\, ) :=
\ii\hbar\, \tilde\p^i \, \underline{f} \qquad \mbox{and} \qquad  M^i
(\, \underline{f}\, ) =\frac{\ii\ell_s^4}{6 \hbar}\, R^{ijk}\, p_j\,\p_k\, \underline{f} \label{rep}
\ee
for all $\underline{f} \in C^\infty({\cal M})$.
The operators $P_i$ and $\tP^i$ respectively generate position and momentum translations in phase space, while $M^i$ generate Bopp shifts. Alternatively, we can represent $\mfh$ on the
algebra $\Diff(M)$ of differential operators on $M$ by
\beq
P_i= \ad_{\partial_i} \ , \qquad \tP^i= \ad_{x^i} \qquad \mbox{and} \qquad M^i= \mbox{$\frac{1}{6}$} \, \ell_s^4\, R^{ijk}\, \partial_j\, \ad_{\partial_k} \ ,
\label{repdiffM}\eeq
with $\ad_{\partial_i}(x^j)=\delta_i^j= -\ad_{x^j}(\partial_i)$.

Let $U(\mfh)$ be the universal enveloping algebra of the Lie algebra
$\mfh$, i.e., the associative algebra of $\complex$-linear combinations of
products of the generators $P_i, \tP^i, M^i$ modulo the Lie algebra relations
(\ref{PMRP}) as well as all other vanishing relations
$[P_i,P_j]=0, [P_i,M^j]=0, [M^i,M^j]=0$, etc.
This associative unital algebra is a Hopf algebra with coproduct $\Delta:
U(\mfh)\to U(\mfh)\otimes U(\mfh)$, counit $\varepsilon:  U(\mfh)\to
\complex$ and  antipode $S:U(\mfh)\to U(\mfh)$
defined on generators $X\in\{P_i,\tP^i,M^i\}$ as
\eq
\Delta(X)=X\otimes 1+1\otimes X~, \qquad \varepsilon(X)=0 \qquad \mbox{and} \qquad S(X)=-X \ .\label{DeS}
\ee
The coproduct $\Delta$ and counit $\varepsilon$ are linear maps extended multiplicatively to
all of $U(\mfh)$, while the antipode $S$ is a linear map extended
anti-multiplicatively to all of $U(\mfh)$, i.e., $S(X\,Y)=S(Y)\,
S(X)$. The action of $\mfh$ on $C^\infty({\cal M})$ naturally  extends to an
action of $U(\mfh)$ on $C^\infty({\cal M})$ via
\beq
(X_1\, X_2\cdots
X_n)(\, \uf\, ):=X_1\big(X_2(\cdots X_n(\, \uf\, )\cdots)\big)
\eeq
for all $X_1,X_2,\ldots, X_n\in \mfh$ and $\uf\in C^\infty({\cal M})$.

The coproduct on generators $X$ encodes the Leibniz
rule $X(\, \underline{f}\ \underline{g}\, )=X(\,\underline{f}\, )\,
\underline{g} +\underline{f}\, X(\, \underline{g}\,)$ for all
functions $\underline{f}\, ,\, \underline{g}$ on phase space. An
equivalent expression for the Leibniz rule is 
$X(\, \underline{f}\ \underline{g}\, )=\mu\circ\Delta(X)(\,\underline{f}\otimes
\underline{g}\,)$. Multiplicativity of the coproduct implies more
generally that 
\eq\label{leibD}
\zeta(\, \underline{f}\ \underline{g}\,)=\mu\big(\Delta(\zeta)(\, \underline{f}\otimes 
\underline{g}\, )\big)
\ee
for all $\zeta\in U(\mfh)$, which equivalently reads $\zeta\circ\mu\,(\,\underline{f}\ot 
\underline{g}\,)=\mu\circ\Delta(\zeta)(\,\underline{f}\otimes 
\underline{g}\,)$.

We can consider two twists in $U(\mfh)\otimes
U(\mfh)$ given by\footnote{The twists $F$ and $F'$ should be more precisely
  regarded respectively as power series expansions in $\hbar$ and
  $\ell_s^4/\hbar$, the deformation
  parameters that for ease of notation have been absorbed 
  into the definition of the generators $\tP^i$ and $M^i$ respectively. Then $U(\mfh)$ is an algebra
  over $\complex[[\hbar,\ell_s^4/\hbar]]$, the formal power series in
  $\hbar$ and $\ell_s^4/\hbar$ with
  coefficients in $\complex$. In this setting the twists $F$ and $F'$
  live in a completion of the tensor product $U(\mfh)\otimes U(\mfh)$.}
\eq
F=\exp\big(-\mbox{$\frac{1}{2}$}\, (P_i \ot \tP^i -\tP^i \ot P_i ) \big) \qquad \mbox{and} \qquad F' =\exp\big(-\mbox{$\frac{1}{2}$}\, (M^i \ot P_i -  P_i \ot M^i)\big) \ . \label{F}
\ee
They are both abelian cocycle twists of the Hopf algebra $U(\mfh)$ with the coalgebra structures $\Delta,
\varepsilon, S$, i.e., they are invertible and satisfy the relations
\eqa
(F\ot1)\, (\Delta\ot\Id)F &=& (1\ot F)\, (\Id\ot\Delta)F \ , \label{2cocycle}\\[4pt]
 (\eps\ot\Id)F &=& 1 \ = \ (\Id\ot\eps)F \ , \label{counital}
\eqaend
plus the analogous relations for $F\to F'$.
The first relation is a 2-cocycle condition which assures that the
star products obtained from the twists are associative. The
second relation is just a normalization condition.
Since the bracket (\ref{PMRP}) is central in the Lie algebra
$\mfh$, we can use the Baker--Campbell--Hausdorff formula 
\eq
\exp(A)\, \exp(B)=\exp\big([A,B]\big)\, \exp(B)\, \exp(A) \label{BCH}
\ee 
for $[A,B]$ central (and $A,B$ in $U(\mfh)\otimes U(\mfh)$ and then in
$U(\mfh)\otimes U(\mfh)\otimes U(\mfh)$) to compute
\eq
F\, F'=F'\, F \qquad \mbox{and} \qquad (1\otimes  F'\, )\, (\Id\otimes \Delta)F= \e^{R}\, (\Id\otimes \Delta) F\, (1\otimes F'\, )
\ee
where we have introduced the central element in $U(\mfh)\otimes
U(\mfh)\otimes U(\mfh)$ given by
\eq
\e^R:=\exp\big(\,\mbox{$\frac{\ell_s^4}{12}$}\, R^{ijk}\, P_i\otimes
P_j\otimes P_k\, \big) ~.
\ee
Note that, in the representation (\ref{rep}) on $C^\infty({\cal M})$,
the square of this operator is the associator (\ref{assoctridiff}), i.e.,
\eq
\Phi=\e^{2R}~.
\ee

The $n$-triproducts of functions from Section~\ref{sec:star} can also be obtained
from more algebraic considerations using the twists $F$ and $F'$ of the
Hopf algebra $U(\mfh)$. On one hand this requires a minimal amount of Hopf
algebra technology, while on the other hand this approach can be applied to
any algebra that carries a representation of $U(\mfh)$. In particular,
we shall apply it to the algebras of exterior differential forms and of Lie derivatives (infinitesimal diffeomorphisms).

Let us first consider the twist $F'$. 
Associated with $F'$ there is a new Hopf Algebra
$U(\mfh)^{F'}$. The algebra structure is the same as that of $U(\mfh)$,
the new coproduct is given by 
\eq 
\Delta^{F'}(\xi)=F'\, \Delta(\xi)\, F'^{-1}
\ee
for all $\xi\in U(\mfh)$, while $U(\mfh)^F$ has the same counit $\varepsilon$
and antipode $S$ as $U(\mfh)$ due to the abelian structure of the
twist, i.e., $P_i$ and $M^j$ commute.
We can now further deform the Hopf algebra $U(\mfh)^{F'}$ with the
twist $F$. Notice that while $F$ is a cocycle twist for the Hopf
algebra $U(\mfh)$, because it satisfies (\ref{2cocycle}) and
(\ref{counital}), 
it is \emph{not} a cocycle twist of the new Hopf algebra $U(\mfh)^{F'}$. To
compute its failure it is convenient to compare the actions of the
coproducts $\Delta^{F'}$ and $\Delta$ on the twist element $F$.
\begin{proposition}\label{pdf}
 \ $
\big(\Id\ot\Delta^{F'}\,\big)F=\e^{R}\,(\Id\ot\Delta)F \qquad \mbox{and}
\qquad 
\big(\Delta^{F'}\otimes \Id\big)F = \e^{-R}\, (\Delta\ot\Id)F \ . $
\end{proposition}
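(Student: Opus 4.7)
The plan is to work at the Lie algebra level. Since $\Id\ot\Delta$ and $\Id\ot\Delta^{F'}$ are both algebra homomorphisms $U(\mfh)^{\ot 2}\to U(\mfh)^{\ot 3}$, and since $F=\exp(Y)$ with $Y=-\frac{1}{2}(P_i\ot\tP^i-\tP^i\ot P_i)$, I can write
\eq
(\Id\ot\Delta)F=\exp\big((\Id\ot\Delta)Y\big) \qquad\text{and}\qquad
(\Id\ot\Delta^{F'})F=\exp\big((\Id\ot\Delta^{F'})Y\big)\ .
\ee
The strategy is therefore to compute $\Delta^{F'}$ on the generators $P_i,\tP^i$ appearing in $Y$, compare $(\Id\ot\Delta^{F'})Y$ with $(\Id\ot\Delta)Y$, and show their difference equals the central element $R=\frac{\ell_s^4}{12}R^{ijk}P_i\ot P_j\ot P_k$. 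Once that is done, since $R$ is built entirely out of $P$'s, which commute with every generator of $\mfh$, it is central in $U(\mfh)^{\ot 3}$; in particular it commutes with $(\Id\ot\Delta)Y$, so the exponentials factorize: $\exp\big((\Id\ot\Delta)Y+R\big)=e^R\,\exp\big((\Id\ot\Delta)Y\big)$.

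The first key computation is $\Delta^{F'}(P_i)=\Delta(P_i)$, which is immediate because $P_i$ commutes with every generator appearing in $F'$ (both $M^j$ and $P_j$). The second is $\Delta^{F'}(\tP^i)=F'\Delta(\tP^i)F'^{-1}$. Writing $F'=\exp(X')$ with $X'=-\frac{1}{2}(M^j\ot P_j-P_j\ot M^j)$ and computing the inner brackets $[X',\tP^i\ot 1]$ and $[X',1\ot\tP^i]$ using the single nontrivial relation $[\tP^i,M^j]=\frac{1}{6}\ell_s^4 R^{ijk}P_k$, one finds that every outcome lies in the central $P$-subalgebra. Hence BCH (equation \eqref{BCH}) collapses to first order, and one gets
\eq
\Delta^{F'}(\tP^i)=\Delta(\tP^i)+\mbox{$\frac{\ell_s^4}{12}$}\,R^{ijk}\,(P_k\ot P_j-P_j\ot P_k)\ .
\ee

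Substituting this into $(\Id\ot\Delta^{F'})Y$, only the $P_i\ot\Delta^{F'}(\tP^i)$ piece differs from $(\Id\ot\Delta)Y$, yielding
\eq
(\Id\ot\Delta^{F'})Y-(\Id\ot\Delta)Y=-\mbox{$\frac{\ell_s^4}{24}$}\,R^{ijk}\,\big(P_i\ot P_k\ot P_j-P_i\ot P_j\ot P_k\big)\ .
\ee
Relabelling indices in the first summand and invoking the full antisymmetry of $R^{ijk}$ collapses the bracket to $2\,R^{ijk}P_i\ot P_j\ot P_k$ up to the right factor, giving exactly $R$. This proves the first identity. The second identity is proved by the same argument applied in the first slot, where the asymmetry $Y=-\frac{1}{2}(P_i\ot\tP^i-\tP^i\ot P_i)$ and the signs in $[X',\tP^i\ot 1]$ versus $[X',1\ot\tP^i]$ conspire to flip the overall sign, producing $-R$ in the exponent instead of $+R$.

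The main obstacle is purely bookkeeping: tracking the index placements in the two terms of $Y$ and exploiting the total antisymmetry of $R^{ijk}$ to collect the trivector contributions into the single expression $R$. Conceptually there is no obstruction, because all relevant commutators land in the central $P$-subalgebra, so BCH truncates at first order and $\exp$ behaves multiplicatively on the decomposition into $R$ plus the rest.
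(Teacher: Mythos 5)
Your proof is correct and follows essentially the same route as the paper's: compute $\Delta^{F'}(\tP^i)$ by conjugation with $F'$ (which truncates at first order in BCH because the correction lands in the central $P$-subalgebra), substitute into the exponent of $F$ using multiplicativity of the coproduct, and factor out the central element $\e^{R}$; your intermediate expression $\frac{\ell_s^4}{12}R^{ijk}(P_k\ot P_j-P_j\ot P_k)$ reduces by antisymmetry to the paper's $-\frac{\ell_s^4}{6}R^{ijk}P_j\ot P_k$, and the sign bookkeeping for both identities checks out. The only cosmetic difference is that the paper obtains the second identity via the observation that $F\mapsto F^{-1}$ under flipping its tensor legs, whereas you redo the computation in the first slot directly — both are fine.
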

\begin{proof}
The coproduct $\Delta^{F'}$
 differs from $\Delta$ when applied to the generators $\tP^i$. We consider the term in
 (\ref{BCH}) which is linear in $B$, i.e., $\exp(A)\, B \,\exp(-A)=B+[A,B]$, and we
 identify $B$ with $\Delta(\tP^i)$ in order to easily compute
\eqa
\Delta^{F'} (\tP^i)&=&F'\,\Delta(\tP^i)\,F'{}^{-1} \nn\\[4pt] &=&
F'\,\big(\tP^i\otimes 1+1\otimes \tP^i\big)\, F'{}^{-1} 
\nn \\[4pt] &=& \tP^i\otimes 1+1\otimes \tP^i- \mbox{$\frac{1}{2}$}\,\big[M^k\ot
  P_k-P_k\ot M^k\,,\, \tP^i\otimes 1+1\otimes \tP^i\big]\nn\\[4pt]
&=&
\Delta(\tP^i) - \mbox{$\frac{1}{6}$} \,\ell_s^4\, R^{ijk}\, P_j\otimes P_k~.\label{DPt}
\eqaend
Next we compute
\eqa\label{idDF'F}
\big(\Id\ot\Delta^{F'}\,\big)F&=&\big(\Id\ot\Delta^{F'}\,\big)
\exp\big(-\mbox{$\frac{1}{2}$}\, (P_i \ot \tP^i -\tP^i \ot P_i )\big)\nn\\[4pt] &=&
\exp\Big(\mbox{$-\frac{1}{2}$}\, \big(P_i \ot \Delta^{F'}(\tP^i)- \tP^i \ot  \Delta^{F'}(P_i)\big) \Big)\nn \\[4pt]
&=&
\exp\Big(\mbox{$-\frac{1}{2}$}\, \big(P_i \ot \Delta(\tP^i) -\tP^i \ot  \Delta(P_i)
 \big) \Big) \,\exp\big(\mbox{$\frac{\ell_s^4}{12}$}\, R^{ijk}\, P_i\ot P_j\otimes P_k\big) \nn\\[4pt]
&=&(\Id\ot\Delta)F \,\e^{R} \ = \ \e^{R} \, (\Id\ot\Delta)F~,
\eqaend
where in the second equality we used multiplicativity of the coproduct
$\Delta$.
The proof of the second identity is very similar, or alternatively observe that $F\to F^{-1}$ under flipping of the
order of its legs in the tensor product $U(\mfh)\otimes
U(\mfh)$.
\end{proof}
As a corollary we find that $F$ is not a cocycle twist of $U(\mfh)^{F'}$, i.e., it
fails the cocycle condition for the Hopf algebra  $U(\mfh)^{F'}$.
\begin{corollary}\label{coroll}
 \ $(1\ot F)\, \big(\Id\ot\Delta^{F'}\, \big)F=\e^{2R}\, (F\ot1)\, \big(\Delta^{F'}\ot\Id\big)F ~.$
\end{corollary}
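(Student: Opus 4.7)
The plan is to combine the two identities from Proposition \ref{pdf} with the genuine $2$-cocycle condition \eqref{2cocycle} for $F$ on $U(\mfh)$, exploiting the fact that the element $\e^R\in U(\mfh)^{\otimes 3}$ is central (since $P_i$ is central in $\mfh$), so it commutes past every factor we encounter.

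First I would rewrite the left hand side using the first identity of Proposition \ref{pdf}:
\eq
(1\ot F)\,\big(\Id\ot\Delta^{F'}\,\big)F \,=\, (1\ot F)\,\e^{R}\,(\Id\ot\Delta)F \,=\, \e^R\,(1\ot F)\,(\Id\ot\Delta)F \ ,
\ee
where in the last step I moved the central element $\e^R$ to the left. Then the cocycle condition \eqref{2cocycle} for the twist $F$ on $U(\mfh)$ converts this to $\e^R\,(F\ot 1)\,(\Delta\ot\Id)F$.

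Next I would rewrite the right hand side using the second identity of Proposition \ref{pdf}:
\eq
\e^{2R}\,(F\ot 1)\,\big(\Delta^{F'}\ot\Id\big)F \,=\, \e^{2R}\,(F\ot 1)\,\e^{-R}\,(\Delta\ot\Id)F \,=\, \e^{R}\,(F\ot 1)\,(\Delta\ot\Id)F \ ,
\ee
again using centrality of $\e^R$ to combine the exponentials. This matches the expression obtained for the left hand side, proving the corollary.

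There is no real obstacle: everything reduces to bookkeeping with central elements and one application of the $F$-cocycle condition. The only care needed is to verify that $\e^R$ is indeed central in $U(\mfh)^{\otimes 3}$, which follows immediately from the fact that $P_k$ is central in the Lie algebra $\mfh$ by \eqref{PMRP} together with the vanishing of all other brackets among $P_i, \tP^i, M^i$ except the one displayed.
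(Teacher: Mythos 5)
Your proof is correct and follows exactly the route the paper takes: rewrite both sides via the two identities of Proposition~\ref{pdf}, use centrality of $\e^{R}$ in $U(\mfh)^{\otimes 3}$ to collect the exponentials, and conclude with the cocycle condition (\ref{2cocycle}) for $F$. Nothing is missing.
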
 
\begin{proof} 
Use Proposition \ref{pdf} to rewrite the left-hand and right-hand
sides in terms of the undeformed coproduct $\Delta$. Then recall that $\e^R$
is central in the algebra $U(\mfh)\otimes U(\mfh)\otimes U(\mfh)$,
and finally use the cocycle property (\ref{2cocycle}).
\end{proof}

The Hopf algebra $U(\mfh)$ acts on the algebra
$C^\infty({\cal{M}})$ of functions on
phase space via the representation (\ref{rep}), 
and this action is compatible with
the product in  $C^\infty({\cal{M}})$ in the sense of (\ref{leibD}).
Because of this compatibility we can then deform $C^\infty({\cal{M}})$ into a new algebra
$C^\infty({\cal{M}})_{\star_{F'}}$ defined by the new product 
\eq
\underline{f}\, \star_{F'}\, 
\underline{g}=\mu_{F'}(\, \underline{f}\, \otimes \,
\underline{g}\,):= \mu\big({F'}{}^{-1}(\, \underline{f}\, \otimes \, \underline{g}\, ) \big)
\ee
for any phase space functions $\underline{f}$ and $\underline{g}\,$.
The algebra $C^\infty({\cal{M}})_{\star_{F'}}$ is noncommutative but
associative because of the cocycle condition satisfied by $F'$.
On $C^\infty({\cal{M}})_{\star_{F'}}$ there is a natural action of the
Hopf algebra  $U(\mfh)^{F'}$; it is again defined by (\ref{rep}). The
deformed coproduct $\Delta^{F'}$, which characterizes $U(\mfh)^{F'}$,
describes the action of any element $\zeta\in
U(\mfh)^{F'}$ on products of functions, i.e., for all $\underline{f}\,
, \, \underline{g}\in C^\infty({\cal{M}})_{\star_{F'}}$, by (\ref{leibD}) one has
\eqa
\zeta(\,\underline{f}\, \star_{F'}\, 
\underline{g}\, )&=&\zeta\circ \mu_{F'}(\, \underline{f}\, \otimes \,
\underline{g}) \nn \\[4pt] &=& \zeta\circ \mu\circ {F'}{}^{-1}(\,
\underline{f}\, \otimes \, \underline{g}\, ) \nn\\[4pt]
&=& \mu\circ \Delta(\zeta)\circ {F'}{}^{-1}(\, \underline{f}\,
\otimes \, \underline{g}\, ) \ = \ \mu_{F'}\circ\Delta^{F'}(\zeta) (\, \underline{f}\, \otimes \, \underline{g}\, ) ~.\label{leibD'}
\eqaend
In particular it leads to a deformed Leibniz rule for the generators
$\tilde P^i$ as can be read off from~(\ref{DPt}).

Now we iterate this procedure and deform the Hopf algebra
$U(\mfh)^{F'}$ and the noncommutative algebra $C^\infty({\cal{M}})_{\star_{F'}}$ 
with the 2-cochain $F\in U(\mfh)^{F'}\otimes U(\mfh)^{F'}$. 
 This is equivalent to the deformation of the original Hopf
 algebra $U(\mfh)$ and the algebra of smooth functions  $C^\infty({\cal{M}})$  with the 2-cochain
\eq
{\cal F}:=F\,F'=F'\,F~.
\ee
Since $F$ fails the $\Delta^{F'}$ 2-cocycle
property, the coproduct $(\Delta^{F'}\, )^F$, defined as 
\beq
(\Delta^{F'}\,)^F(\zeta)=F\, \Delta^{F'}(\zeta)\, F^{-1}
=\Delta^{\cal F}(\zeta) \ ,
\eeq
will equip
$(U(\mfh)^{F'}\,)^F=U(\mfh)^{\cal F}$ with the structure of a
quasi-Hopf algebra~\cite{Drinfeld} (see also~\cite{Mylonas2013,Barnes2014}). Correspondingly,
$C^\infty({\cal{M}})_{\star_{F'}}$ deforms to a quasi-associative noncommutative algebra with product
\eq
\underline{f}\, \star_{\cal F} \,\underline{g}
=\underline{f}\, \star_{F\, F'} \,\underline{g}=\mu_{F'}\big(F^{-1}(\,
\underline{f}\, \otimes \, \underline{g}\,
)\big)=\mu\big(({F\,F'}\,)^{-1}(\, \underline{f}\, \otimes \, \underline{g}\, ) \big)
\ee
for all phase space functions $\underline{f}\, ,\, \underline{g}\, $. 
The failure of associativity of $\star_{\cal{F}}$ is due to the
failure of the cocycle condition for $F$: We have
\eqa
(\, \underline{f}\, \star_{F\,F'}\, \underline{g}\, )\,
\star_{F\,F'}\, \underline{h}&=&\mu_{F\,F'}\circ (\mu_{F\,F'}\otimes\Id)(\,\underline{f}\, \otimes \,
\underline{g}\, \otimes \, \underline{h}\, ) \nn\\[4pt] &=&
\mu_{F'}\circ {F}^{-1}\circ (\mu_{F'}\otimes 1)\circ ({F}^{-1}\otimes \Id) (\, \underline{f}\, \otimes
\, \underline{g}\, \otimes \, \underline{h}\, )\nn\\[4pt]  &=&\label{eq424'}
\mu_{F'}\circ
(\mu_{F'}\otimes\Id)\circ(\Delta^{F'}\otimes\Id){F}^{-1}\, ({F}^{-1}\otimes 1) (\, \underline{f}\, \otimes
\, \underline{g}\, \otimes \, \underline{h}\, )\\[4pt]  &=&
\mu_{F'}\circ (\Id\otimes\mu_{F'}) \circ
(\Id\otimes\Delta^{F'}\, )F^{-1}\, (1\otimes F^{-1})\, \e^{2R} (\, \underline{f}\, \otimes
\, \underline{g}\, \otimes \, \underline{h}\, ) \nn 
\eqaend
where in the third line we used (\ref{leibD'}) on the first tensor
product entry of $F^{-1}$, while
in the fourth line we used associativity of the
$\star_{F'}$ product in the form $\mu_{F'}\circ (\mu_{F'}\otimes
\Id)=\mu_{F'}\circ (\Id\otimes \mu_{F'})$ together with the inverse of the
2-cochain property of Corollary~\ref{coroll}. Similarly, we have
\eqa\label{associat2}
\underline{f}\, \star_{F\,F'}(\, \underline{g} \, \star_{F\,F'}\,
\underline{h}\, )&=&\mu_{F\,F'}\circ(\Id\otimes \mu_{F\,F'}) (\, \underline{f}\, \otimes 
\, \underline{g}\, \otimes \, \underline{h}\, )~\nn\\[4pt]
&=&\mu_{F'}\circ(\Id\otimes \mu_{F'}) (\Id\ot \Delta^{F'}\, )F^{-1} \, (1\ot F^{-1})(\, \underline{f}\, \otimes 
\, \underline{g}\, \otimes \, \underline{h}\, ) \ .
\eqaend
Comparison of these products gives (\ref{algid329}).

For later use,  we note that we have chosen to derive the
nonassociativity structure of the $\star_\FF$ product using
associativity of the $\star_{F'}$ product and the 2-cochain property
of Corollary~\ref{coroll}. We could equally as well have decomposed the $\star_\FF$ product
as $\mu_{\FF}=\mu\circ\FF^{-1}$ and then, following similar steps as
in (\ref{eq424'}), used associativity of the
undeformed product $\mu$ and the 2-cochain property for $\FF=F\,F'$
(or for $\FF^{-1}$) which is not difficult to show to be given by
\eq 
(\Delta\otimes \Id)\FF^{-1}\, (\FF^{-1}\otimes 1)=(\Id\otimes 
\Delta)\FF^{-1}\, (1\otimes \FF^{-1})\, \e^{2R}~.\label{FFfundid}
\ee 

In a similar vein, using the isomorphism
$C^\infty(\Mcal)_{\star_F}\cong\Diff(M)$ of noncommutative associative
algebras and the representation (\ref{repdiffM}) of the Hopf algebra $U(\mfh)$, one can deform the algebra of differential operators on configuration space $M$ to a quasi-associative noncommutative algebra $\Diff(M)_{\star_{F'}}$.

\subsection{Configuration space triproducts}\label{sec:cst}

Let us now demonstrate how to reproduce the triproducts of Section~\ref{sec:star} within the present formalism. The pullback of the product on the leaf of constant
momentum $\bar{p}$ is given by
\eqa
s^\ast_{\bar{p}}(\, \underline{f}\, \star_{F\, F'}
\,\underline{g}\, )
&=&\mu_{F'}\big(F^{-1}(s_{\bar{p}}^*\, \underline{f} \, \otimes \,
s_{\bar{p}}^*\, \underline{g}\, ) \big) \nn\\[4pt]
&=& \mu_{F'}\big(s_{\bar{p}}^*\, \underline{f}\, \otimes \,
s_{\bar{p}}^*\, \underline{g}\, \big) \
= \ \mu\big(\exp\big(\, \mbox{$\frac{\ii\ell^4_s}{6\hbar}$}\,
\theta_{\bar{p}}\, \big)(s_{\bar{p}}^* \, \underline{f}\, \otimes \,
s_{\bar{p}}^*\, \underline{g}\, )\big)
\eqaend
where in the second equality we observed that
$F$ acts as the identity on functions of constant momentum, while in
the third equality we recalled the definition of the bivector
$\theta_{\bar{p}}$ from (\ref{defthetap}).
In particular if $f,g$ are functions on configuration space and
$\underline{f}=\pi^\ast f$,
$\underline{g}=\pi^\ast g$, then we recover the 2-product
$\mu^{(2)}_{\bar{p}}$ from (\ref{mu2bidif}),
\eq
s^\ast_{\bar{p}}\big(\pi^\ast f\, \star_{F\, F'}
\,\pi^\ast g \big)=f\, \As_{\bar{p}}\, g=\mu^{(2)}_{\bar{p}}(f\otimes g) ~.
\ee 

We now proceed to the product of three phase space functions and its pullback on the leaf of constant
momentum $\bar p$. We substitute in the triple product expression  (\ref{eq424'}) the inverse of
the second identity of Proposition~\ref{pdf} in order to express $\Delta^{F'}$ in terms of $\Delta$ and 
obtain
\eq \label{triprodff'}
(\underline{f}\,\star_{F\,F'}\,\underline{g}\,)\,\star_{F\,F'}
\,\underline{h}=\mu_{F'}\circ (\mu_{F'}\otimes\Id)\circ \e^R \,
(\Delta\otimes\Id){F}^{-1}\, ({F}^{-1}\otimes 1) (\, \underline{f}\, \otimes
\, \underline{g}\, \otimes \, \underline{h}\, )~.
\ee
The pullback of this expression reads as
\eqa \label{spullback}
s^\ast_{\bar{p}}\big((\,\underline{f}\,\star_{F\,F'}\,
\underline{g}\,)\star_{F\,F'} \,\underline{h}\,\big)
&=&
\mu_{F'}\circ(\mu_{F'}\otimes\Id)\circ \e^{R}\, (\Delta\otimes\Id)
F^{-1} \, (F^{-1}\otimes 1)
  \big(s^\ast_{\bar{p}}\, \underline{f}\,\otimes s^\ast_{\bar{p}}\,\underline{g}\,\otimes \,
  s^\ast_{\bar{p}}\,\underline{h}\,\big)\nonumber\\[4pt]
&=&
\As_{\bar{p}}\big(\e^{R}
  (s^\ast_{\bar{p}}\,\underline{f}\,\otimes s^\ast_{\bar{p}}\,\underline{g}\,\otimes \,
  s^\ast_{\bar{p}}\,\underline{h}\,)\big)
\eqaend
where we dropped both $F^{-1}$ and $(\Delta\otimes \Id)F^{-1}=\exp\big(\,
\frac{1}{2}\, (\Delta(P_i)\otimes \tilde
P^i-\Delta(\tilde{P}^i)\otimes P_i)\, \big)$ because  
they act trivially on functions
of constant momentum.
In particular if $f,g,h$ are functions on configuration space and
$\underline{f}=\pi^\ast f$,
$\underline{g}=\pi^\ast g$, $\underline{h}=\pi^\ast h$, then we recover the basic triproduct
$\mu^{(3)}_{\bar{p}}$ from~(\ref{basic3trip}), i.e.,
\eq\label{sast}
s^\ast_{\bar{p}}\big((\pi^\ast f \,\star_{F\,F'}\,
\pi^\ast g \,)\star_{F\,F'}\, \pi^\ast h\big) =
\As_{\bar{p}}\big(\e^{R}(f\otimes g\otimes h)\big)
=\mu^{(3)}_{\bar{p}}(f\otimes g\otimes h) ~.
\ee 

To generalise this computation to the pullback of the product
of $n$ 
functions we introduce the notation $F={F}^\alpha\otimes {F}{}_\alpha\in
U(\mfh)\otimes U(\mfh)$ (with summation over $\alpha$ understood) and 
define $F_{12}:=F\otimes 1$, 
$F_{23}:=1\otimes F$, $F_{13}:=F^\alpha\otimes 1\otimes  F_\alpha$ in $U(\mfh)^{\otimes 3}$. More generally $F_{ab}\in U(\mfh)^{\otimes n}$ is the element which is non-trivial only in the $a$-th and $b$-th factors of the tensor
product: $F_{ab} =1\otimes\cdots \otimes F^\alpha \otimes\cdots\otimes
F_\alpha \otimes \cdots\otimes 1$.
Similarly, in $U(\mfh)^{\otimes n}$ we set $R_{123}=R\otimes
1\otimes \cdots\otimes 1$ and more generally
\eq
R_{abc}=\mbox{$\frac{\ell_s^4}{12}$}\, R^{ijk}\, P^a_i\, 
P^b_j \, P_k^c \ ,
\ee
 where as before
$\zeta^a\in U(\mfh)^{\otimes n}$ for $\zeta\in U(\mfh)$ is the element which is non-trivial only in the $a$-th factor of the tensor
product: $\zeta^a=1\otimes\cdots \otimes \zeta \otimes\cdots\otimes 1$.

With these notations we have
\eqa\label{idDF-1}
(\Delta\otimes\Id)F^{-1}&=&
(\Delta\ot \Id)
\exp\big(\mbox{$\frac{1}{2}$}\, (P_i \ot \tP^i -\tP^i \ot P_i )\big)
\nn\\[4pt] &=&
\exp\Big(\mbox{$\frac{1}{2}$}\, \big(\Delta(P_i) \ot \tP^i-
\Delta(\tP^i) \ot P_i\big) \Big) \ = \ F^{-1}_{13}\, F^{-1}_{23} \ ,
\eqaend
where in the last step we expanded the coproduct and then observed that
the arguments of the exponential mutually commute.
Substituting this equality in the second identity of Proposition~\ref{pdf} we obtain
$\big(\Delta^{F'}\ot\Id\big)F^{-1}=\e^R\, F^{-1}_{13}\, F^{-1}_{23}$, which
is immediately generalized to
\eq\label{idDF}
\big(\Delta^{F'}\ot\Id^{\ot
  n-1}\big)F^{-1}_{1e}=\e^{R_{1\/2\,e+1}}\,F^{-1}_{1\,e+1}\, F^{-1}_{2\,e+1}~.
\ee
Using as in (\ref{idDF-1}) multiplicativity of the coproduct and
commutativity of the momentum algebra we also easily obtain
$\big(
\Delta^{F'}\ot \Id^{\otimes 2}\big)\e^R=\e^{R_{1\/3\/4}}\, \e^{R_{2\/3\/4}}$ which is immediately generalized to
\eq\label{idDR}
\big(\Delta^{F'}\ot \Id^{\ot n-1}\big)\e^{R_{1\/b\/c}}=\e^{R_{1\,
    b+1\, c+1}}\, \e^{R_{2\,b+1 \, c+1}} \ .
\ee
\begin{proposition}\label{*FF'n}
\eqa \nn
&&\!\!\!\!\!\!\!\!\!\!\!\!\!\!\!\big(\cdots(\,\underline{f}{}_1\star_{F\,F'} \underline{f}{}_2\,)\star_{F\,F'} \cdots 
\star_{F\,F'} \underline{f}{}_{n-1}\big)\star_{F\,F'}
\underline{f}{}_n \\ 
&& \qquad \qquad \qquad \qquad=~\mu_{F'}\circ(
\mu_{F'}\ot\Id)\circ\cdots \circ (\mu_{F'}\ot \Id^{\ot n-2}) \nn \\ &&
\qquad \qquad \qquad \qquad \qquad \qquad \qquad \circ\, 
\prod_{1\leq a<b<c\leq 
  n}\, \e^{R_{a\/b \/c}} \ \prod_{1\leq d<e\leq n}\,
F^{-1}_{d\/e}\big(\,\underline{f}{}_1\ot\cdots \otimes\,
\underline{f}{}_n\big) \ . \nn 
\eqaend 
\end{proposition}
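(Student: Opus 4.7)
My plan is to prove the formula by induction on $n$, using the intertwining identities (\ref{leibD'}), (\ref{idDF}), and (\ref{idDR}) already established in the text.

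For the base case, I would take $n=3$, which is precisely (\ref{triprodff'}) combined with the factorization (\ref{idDF-1}) of $(\Delta\otimes\Id)F^{-1}$ as $F^{-1}_{13}\,F^{-1}_{23}$. Since (\ref{triprodff'}) also gives an explicit $\e^{R}$ factor that equals $\e^{R_{123}}$, the result matches $\prod_{1\leq a<b<c\leq 3}\e^{R_{abc}}\,\prod_{1\leq d<e\leq 3}F^{-1}_{de}$ after noting that all these operators commute (they are polynomials in the mutually commuting momentum generators $P_i$, hence central in the relevant tensor products).

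For the inductive step, assume the formula holds for $n$ factors, and denote its right-hand side by $M_n(\,\underline{f}_1\otimes\cdots\otimes\underline{f}_n\,)$. Then
\eq
\big(M_n(\,\underline{f}_1\otimes\cdots\otimes\underline{f}_n\,)\big)\star_{F\,F'}\,\underline{f}_{n+1}
=\mu_{F'}\circ F^{-1}_{1\,2}\circ(M_n\otimes\Id)(\,\underline{f}_1\otimes\cdots\otimes\underline{f}_{n+1}\,) \ .
\ee
The heart of the proof is to commute the outer $F^{-1}_{1\,2}$ through the iterated composition of $\mu_{F'}$'s that defines $M_n$. Applying (\ref{leibD'}) once moves $F^{-1}$ past the outermost $\mu_{F'}$, at the cost of replacing the action on the first leg by $\Delta^{F'}$; by (\ref{idDF}) this produces one extra $F^{-1}_{a\,n+1}$ factor and one extra $\e^{R_{\,\cdot\,\cdot\,n+1}}$ factor. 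Iterating through all $n-1$ nested multiplications, and using (\ref{idDR}) to lift the previously existing $\e^{R_{a\/b\/c}}$ factors each time a $\Delta^{F'}$ is applied, generates precisely the new factors $F^{-1}_{a\,n+1}$ for $a=1,\dots,n$ together with $\e^{R_{a\,b\,n+1}}$ for $1\leq a<b\leq n$.

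Combining these with the inductive factors and using centrality to reorder freely, the accumulated operator becomes $\prod_{1\leq a<b<c\leq n+1}\e^{R_{abc}}\,\prod_{1\leq d<e\leq n+1}F^{-1}_{de}$, and the outer composition of $\mu_{F'}$'s acquires one more factor of $(\mu_{F'}\otimes\Id^{\otimes n-1})$ on the left, completing the induction. The main obstacle is purely combinatorial: at each of the $n-1$ intertwining steps, one must verify carefully that the new $F^{-1}_{a,n+1}$ and $\e^{R_{a,b,n+1}}$ contributions together with the lifted inductive factors assemble into the symmetric products indexed by $1\leq a<b<c\leq n+1$ and $1\leq d<e\leq n+1$. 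This bookkeeping is tractable precisely because everything lives in the commutative subalgebra generated by the $P_i$'s, so no ordering issues arise and the only content is the enumeration of indices.
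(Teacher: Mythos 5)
Your proof is correct and rests on the same two pillars as the paper's: induction on $n$, and the intertwining identity (\ref{leibD'}) combined with the factorizations of the coproduct of $F^{-1}$ from Proposition~\ref{pdf} and (\ref{idDF-1}). The organization of the induction is reversed, however. The paper groups the \emph{innermost} product, applying the inductive hypothesis to the $n$ elements $(\,\underline{f}{}_1\star_{F\,F'}\underline{f}{}_2\,),\underline{f}{}_3,\dots,\underline{f}{}_{n+1}$ and then expanding $\underline{f}{}_1\star_{F\,F'}\underline{f}{}_2=\mu_{F'}\circ F^{-1}(\,\underline{f}{}_1\otimes\underline{f}{}_2\,)$; moving this single inner $\mu_{F'}$ to the left costs exactly one application of $\Delta^{F'}\otimes\Id^{\otimes n-1}$ to the accumulated operator, which is what (\ref{idDF}) and (\ref{idDR}) compute. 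You instead peel off the \emph{outermost} multiplication, so the first leg of the outer $F^{-1}$ must be pushed through all $n-1$ nested $\mu_{F'}$'s, i.e.\ you need the $(n-1)$-fold iterated coproduct $\Delta^{F'}$ of $\tilde P^i$; iterating (\ref{idDF}) does produce precisely the new factors $\prod_{a\leq n}F^{-1}_{a\,n+1}$ and $\prod_{1\leq a<b\leq n}\e^{R_{a\,b\,n+1}}$, so your bookkeeping closes and yields the index sets $1\leq a<b<c\leq n+1$ and $1\leq d<e\leq n+1$. Two small points to tidy up: first, (\ref{idDR}) is not actually needed in your version, since the inductive $\e^{R_{abc}}$ factors already sit to the right of all the multiplications and are never hit by a coproduct; second, the operators $F^{-1}_{de}$ live in the commutative subalgebra generated by the $P_i$ \emph{and} the $\tilde P^i$ (not the $P_i$ alone), and they are not central in $U(\mfh)^{\otimes n}$ because they fail to commute with the $M^i$ --- what you actually use, and what does hold, is that all the $F^{-1}_{de}$ and $\e^{R_{abc}}$ mutually commute, which suffices for the free reordering in your final step.
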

\begin{proof}
The proof is by induction. The assertion holds for $n=3$ by
(\ref{triprodff'}). We suppose that it holds for $n>3$ and prove that it holds for
$n+1$ by computing
\eqa
&&\!\!\!\!\!\!\!\!\!\!
\big(\cdots (\,\underline{f}{}_1\star_{F\,F'} \underline{f}{}_2)\star_{F\,F'} \cdots 
\star_{F\, F'}
\underline{f}{}_{n}\big)\star_{F\,F'}\underline{f}{}_{n+1} \\
&& \quad = \ \mu_{F'}\circ(
\mu_{F'}\ot\Id)\circ\cdots\circ (\mu_{F'}\ot \Id^{\ot n-2}) \nn \\ &&
\quad \quad \circ\,
\prod_{1\leq a<b<c\leq 
  n}\, \e^{R_{a\/b \/c}} \ \prod_{1\leq d<e\leq 
  n}\, F^{-1}_{d\/e}
\big(\cdots(\,\underline{f}{}_1\star_{F\,F'} \underline{f}{}_2)\ot
\cdots \otimes\,
\underline{f}{}_{n}\big)\ot\,\underline{f}{}_{n+1}
\nn\\[4pt]
&&\quad = \ \mu_{F'}\circ(
\mu_{F'}\ot\Id)\circ\cdots\circ (\mu_{F'}\ot \Id^{\ot n-2}) \nn \\ &&
\quad \quad \circ\,
\prod_{1\leq a<b<c\leq 
  n}\, \e^{R_{a\/b \/c}} \ \prod_{1\leq d<e\leq 
  n}\, F^{-1}_{d\/e}\, (\mu_{F'}\ot \Id^{\ot
  n-1})F^{-1}_{1\/2}\big(\,\underline{f}{}_1\ot\cdots  \otimes\,
\underline{f}{}_n \otimes\, \underline{f}{}_{n+1}\big) \nn\\[4pt]
&&\quad = \ \mu_{F'}\circ(
\mu_{F'}\ot\Id)\circ\cdots\circ (\mu_{F'}\ot \Id^{\ot n-1}) \nn\\
&& \quad \quad 
\circ\, \bigg(\big(\Delta^{F'}\ot\Id^{\ot n-1} \big)\Big(\, 
\prod_{1\leq a<b<c\leq 
  n}\, \e^{R_{a\/b \/c}} \ \prod_{1\leq d<e\leq 
  n}\, F^{-1}_{d\/e}\;\Big)\bigg) \,
F^{-1}_{1\/2}\big(\,\underline{f}{}_1\ot\cdots \ot\, 
\underline{f}{}_n \otimes\, \underline{f}{}_{n+1}\big) \nn\\[4pt]
&&\quad = \ \mu_{F'}\circ(
\mu_{F'}\ot\Id)\circ\cdots \circ (\mu_{F'}\ot \Id^{\ot n-1}) \nn \\ && \quad
\quad \circ\, 
\prod_{1\leq a<b<c\leq 
  n+1}\, \e^{R_{a\/b \/c}} \ \prod_{1\leq d<e\leq n+1}\,
F^{-1}_{d\/e}\big(\,\underline{f}{}_1\ot\cdots \ot
\underline{f}{}_n\ot\, \underline{f}{}_{n+1}\big)\nn
\eqaend
where in the third step we used (\ref{leibD'}), and in the
last step we used (\ref{idDR}) to write
\eq
\big(\Delta^{F'}\ot \Id^{\ot n-1}\big) \,
\prod_{1\leq a<b<c\leq n}\, \e^{R_{a\/b \/c}}=
\prod_{3\leq b<c\leq n+1}\,\e^{R_{1\/b \/c}} \ 
\prod_{2\leq a<b<c\leq  n+1}\, \e^{R_{a\/b \/c}}
\ee
while from (\ref{idDF}) we obtain
\eqa
\big(\Delta^{F'}\ot \Id^{\otimes n-1}\big) \, \prod_{1\leq d<e\leq
  n}\, F^{-1}_{de}&=&
\prod_{3\leq e<  n+1}\, \e^{R_{1\/2 \,e}} \ 
\prod_{3\leq e\leq n+1}\, F^{-1}_{1\/e} \ 
\prod_{2\leq d<e\leq n+1}\, F^{-1}_{d\/e}\nn\\[4pt]
&=&
\prod_{3\leq e<  n+1}\, \e^{R_{1\/2 \, e}} \ 
\prod_{1\leq d<e\leq n+1}\, F^{-1}_{d\/e} \, F^{}_{12}
\eqaend
and the result follows.
\end{proof}
Proposition \ref{prop:ntriproduct} now follows as an easy corollary. In
Propositon~\ref{*FF'n}
the inverse twists $F_{d\/e}^{-1}$ have been ordered on the right and
therefore when we pullback this result along $s^\ast_{\bar{p}}$, as in (\ref{spullback})
we can drop these terms because their action is
trivial. We therefore obtain 
\eqa
&&\!\!\!\!\!\!\!\!\!\!s^\ast_{\bar{p}}\big[\big(\cdots
(\,\underline{f}{}_1\,\star_{F\,F'} \underline{f}{}_2)\star_{F\,F'} \cdots 
\star_{F\,F'} \underline{f}{}_{n-1}\,\big)\star_{F\,F'}
\underline{f}\, _n\big] \\
&&~~~~~~ \qquad =~\mu_{F'}\circ(
\mu_{F'}\ot\Id)\circ\cdots \circ(\mu_{F'}\ot \Id^{\ot n-2}) \circ
\prod_{1\leq a<b<c\leq 
  n}\, \e^{R_{a\/b \/c}}\big(\,\underline{f}{}_1\ot\cdots \ot \,
\underline{f}{}_n\big) \ , \nn 
\eqaend 
and Proposition~\ref{prop:ntriproduct} follows immediately by setting $\underline{f}{}_i=\pi^*f_i$.

\subsection{Differential forms and tensor fields}\label{DiffForms}

The algebraic techniques we have described in this section can be
applied to any algebra carrying a Hopf algebra symmetry. Hence we can
extend our results by considering the larger algebra of exterior
differential forms rather than just the algebra of functions. 
There is a natural extension of the representation (\ref{rep}) of  the
Lie algebra $\mfh$  on $C^\infty({\cal M})=\Omega^0(\Mcal)$ to the
vector space of exterior forms $\Omega^\bullet(\Mcal) =
C^\infty(\Mcal,\bigwedge^\bullet T^*\Mcal)
=\bigoplus_{n\geq0}\, \Omega^n(\Mcal)$; it is given by the Lie
derivative ${\cal L}$. This representation 
is extended to all of $U(\mfh)$ in the obvious way via $\Lie_{\xi\, \zeta}=\Lie_\xi\circ\Lie_\zeta$
for all $\xi,\zeta\in
U(\mfh)$; for example
$\Lie_{P_i\, P_j}=\Lie_{P_i}\circ \Lie_{P_j}$.
Furthermore, $\Omega^\bullet({\cal M})$ is an algebra with the associative exterior
product $\wedge$, and the Hopf algebra $U(\mfh)$ is a symmetry of this algebra
because of the Leibniz rule for the Lie derivative, or equivalently
because the exterior product is compatible with the coproduct of elements
of $U(\mfh)$ acting on forms (via the Lie derivative). For ease of notation we set
$\zeta(\, \underline{\eta}\, )=\Lie_\zeta(\,\underline{\eta}\,) $, for all $\zeta\in U(\mfh)$ and
$\underline{\eta}\,,\, \underline{\omega}\in \Omega^\bullet(\Mcal)$,
so that by (\ref{leibD}) the compatibility condition reads as
\eq
\zeta(\, \underline{\eta}\wedge \underline{\omega}\, )=
\wedge\big(\Delta(\zeta) (\, \underline{\eta}\, \ot \, \underline{\omega}\, )
\big)\label{compaw}~.
\ee
The algebra of exterior forms $\Omega^\bullet(\Mcal)$ can then be deformed
to the exterior algebra $\Omega^\bullet (\Mcal)_{\star_{F'}}$, as
vector spaces 
 $\Omega^\bullet(\Mcal)=\Omega^\bullet (\Mcal)_{\star_{F'}}$; the
 deformed
exterior product $\wedge_{\star_{F'}}$ in $\Omega^\bullet
(\Mcal)_{\star_{F'}} $ is given by
\eq
\underline{\eta}\wedge_{\star_{F'}}\underline{\omega}=\wedge\big(F'^{-1}(\,
\underline{\eta}\, \otimes\, \underline{\omega}\, ) \big)~,
\ee
for all $\underline{\eta}\,,\, \underline{\omega}\in
\Omega^\bullet(\Mcal)_{\star_{F'}}$.

Now all expressions in Sections~\ref{sec:cohainprod} and~\ref{sec:cst} 
 have been obtained by solely using:

\begin{itemize}
\item Hopf algebra properties of the twists $F$ and $F'$
 of Proposition~\ref{pdf}  and Corollary~\ref{coroll}.

\item Compatibility of the action of $U(\mfh)$ on $C^\infty({\cal
    M})$ with the coproduct in $U(\mfh)$ and the product
  in $C^\infty({\cal M})$ (and also in (\ref{eq424'})--(\ref{associat2}) the
  associativity of the product $\mu_{F'}$ in $C^\infty({\cal M})_{\star_{F'}}$).

\item Triviality of the action of momentum translations $\tilde{P}^i\in U(\mfh)$ on the images of the pullbacks
 $s^\ast_{\bar p}$ of sections $s_{\bar p}: M\to {\cal M}$ of constant
 momentum $\bar p$; this in particular implies that the twist $F$ acts as the
 identity on the image of $s^\ast_{\bar p}$.
\end{itemize}

Since the pullbacks $s^\ast_{\bar p}$ and $\pi^\ast$ naturally
extend to exterior forms, so that the three conditions above also hold true
for the algebra $\Omega^\bullet(\Mcal)$, we conclude that all
expressions in Section \ref{sec:cohainprod} and in Section~\ref{sec:cst} hold true also when we replace the product $\mu$ with
the exterior product $\wedge$ and functions $\underline{f}{}_i\in
C^\infty({\cal M})$ or $f_i\in C^\infty({M})$
with forms ${\underline{\eta}}{}_i\in \Omega^\bullet(\Mcal)$ or  
${\eta}_i\in \Omega^\bullet(M)$, and the action of
the universal enveloping algebra $U(\mfh)$ on forms is always via the Lie derivative.
In particular we have 
\eqa\label{eq443}
\wedge^{(n)}_{{\bar p}}(\eta_1\otimes\cdots\otimes\eta_n) &:=&
s^\ast_{{\bar p}}\big[\big(\cdots(\pi^* \eta_1 \,\wedge_{\star_{F\,F'}}\,\pi^*\eta_2)
\wedge_{\star_{F\,F'}}\, \cdots \big)\,\wedge_{\star_{F\,F'}}\,\pi^* \eta_n \big]\nn\\[4pt]
&=&
\wedge_{\As_{{\bar p}}}\circ 
\prod_{1\leq a<b<c\leq 
  n}\, \e^{R_{a\/b \/c}}\;
(\eta_1\otimes\cdots\otimes\eta_n) \ ,
\eqaend
for all $\eta_i \in \Omega^\bullet(M)$, $i=1,\dots, n$, where $\wedge_{\As_{\bar p}}$ is the Moyal--Weyl star product on forms
(cf. (\ref{eq:nmoyaldef}))
\eq
\wedge_{\As_{\bar p}}(\eta_1\otimes\cdots\otimes\eta_n)=
 \wedge\Big[\exp\Big(\,
\frac{\ii\ell_s^4}{6\hbar}\, \sum_{1\leq a<b\leq n}\, \theta_{\bar
    p}^{ij}\, \Lie_{\partial_i}^{\,a} \;\Lie_{\partial_j}^{\,b}\,
  \Big)(\eta_1\otimes\cdots\otimes \eta_n)\Big] ~.
\ee

In the general context of
integration of forms, we also find that the products $\wedge^{(n)}_{\bar p}$
under integration reduce to the Moyal--Weyl exterior product
$\wedge_{\As_{\bar p}}$.
\begin{proposition}
If $\eta_1,\dots,\eta_n\in \Omega^\bullet(M)$ are
forms on configuration space $M$ such that $\eta_1\wedge\cdots\wedge\eta_n$ is a top form in
$\Omega^\bullet(M)$, then 
\eq
\int_M\, \wedge_{\bar p }^{(n)}(\eta_1\otimes\cdots\otimes
\eta_n)=\int_M\, \eta_1\wedge_{\As_{\bar
  p}}\cdots \wedge_{\As_{\bar  p}}\eta_n~. \nn
\ee
\end{proposition}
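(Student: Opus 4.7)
The plan is to mimic the proof of Proposition~\ref{prop:onshell}, using~(\ref{eq443}) to replace $\wedge^{(n)}_{\bar p}$ by the Moyal--Weyl exterior product $\wedge_{\As_{\bar p}}$ dressed with the $R$-flux tridifferential operators $\prod_{1\leq a<b<c\leq n}\e^{R_{abc}}$, and then to show that integration over $M$ trivializes the extra dressing exactly as in the scalar case.

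First I would Fourier-expand each form $\eta_a = \sum_{I_a}\eta_a^{I_a}(x)\, \dd x^{I_a}$ by Fourier-transforming its scalar coefficients $\eta_a^{I_a}(x)$. Because the Lie derivative $\Lie_{\partial_i}$ annihilates the constant forms $\dd x^j$, each $\Lie_{\partial_i}$ acts only on the $x$-dependent factor, and on a plane wave $\e^{\ii k\cdot x}$ reduces to multiplication by $\ii k_i$. Consequently $R_{abc}$ acts on $\e^{\ii k_a\cdot x}\otimes \e^{\ii k_b\cdot x}\otimes \e^{\ii k_c\cdot x}$, tensored with the constant $\dd x^{I_1}\otimes\cdots\otimes \dd x^{I_n}$, as the scalar phase $-\frac{\ii\ell_s^4}{12}\, R(k_a,k_b,k_c)$, with the $\dd x^{I_a}$ factors playing the role of spectators. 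Subsequent application of $\wedge_{\As_{\bar p}}$ then produces, exactly as in the proof of Proposition~\ref{prop:onshell}, a single plane wave $\e^{\ii(k_1+\cdots+k_n)\cdot x}$ times the Moyal--Weyl phase $\e^{-\frac{\ii\ell_s^4}{6\hbar}\sum_{a<b}\theta_{\bar p}(k_a,k_b)}$ and the ordinary wedge $\dd x^{I_1}\wedge\cdots\wedge \dd x^{I_n}$.

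Integration over $M$ now yields a delta function $\delta(k_1+\cdots+k_n)$ enforcing momentum conservation, and under this constraint the $R$-flux phase $\exp\bigl(-\frac{\ii\ell_s^4}{12}\sum_{1\leq a<b<c\leq n}R(k_a,k_b,k_c)\bigr)$ collapses to unity by the very antisymmetry argument used in Proposition~\ref{prop:onshell}: one rewrites the sum as $\sum_{b=2}^{n-1}R\bigl(\sum_{a<b}k_a,\,k_b,\,\sum_{c>b}k_c\bigr)$ and each summand vanishes because $\sum_{a<b}k_a = -\sum_{c>b}k_c$ and $R^{ijk}$ is totally antisymmetric. What remains is precisely $\int_M \eta_1\wedge_{\As_{\bar p}}\cdots\wedge_{\As_{\bar p}}\eta_n$, as desired. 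The hardest part is really just bookkeeping: one must track how the constant $\dd x^{I_a}$ factors commute past $R_{abc}$ and $\wedge_{\As_{\bar p}}$, but since both operators are built from contractions with the parallel coordinate vector fields $\partial_i$, the $\dd x^{I_a}$ factors commute past all derivatives and the argument reduces cleanly to the scalar computation of Proposition~\ref{prop:onshell}. A fully algebraic alternative would be to expand the exponential and repeatedly invoke $\wedge_{\As_{\bar p}}\circ\bigl(\sum_a\Lie^a_{\partial_i}\bigr) = \Lie_{\partial_i}\circ\wedge_{\As_{\bar p}}$ to recognise each non-trivial term as a Lie derivative of the wedged top form, then apply Stokes' theorem; but the Fourier route dispatches all orders in $R^{ijk}$ in a single stroke.
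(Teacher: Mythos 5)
Your proof is correct, but it takes a genuinely different route from the paper's. You reduce the statement to the scalar case: decompose each $\eta_a$ into coefficient functions times constant basis forms $\dd x^{I_a}$, observe that $\Lie_{\partial_i}$ annihilates $\dd x^j$ so that both the $R_{abc}$ dressing and $\wedge_{\As_{\bar p}}$ act only on the coefficients with the $\dd x^{I_a}$ as spectators, and then invoke the momentum-conservation/antisymmetry cancellation of Proposition~\ref{prop:onshell} term by term. The paper instead argues intrinsically on forms: it expands $\exp\big(\sum_{a<b<c}R_{abc}\big)=\Id+\sum_{a<b<c}R_{abc}\circ\mathcal{O}$, uses antisymmetry of $R^{ijk}$ to complete $\Lie^{\,b}_{\partial_j}$ to the full sum $\sum_{e=1}^n\Lie^{\,e}_{\partial_j}$ (the added terms cancel against the symmetric index sums), pulls this out through the Leibniz rule as a single overall $\Lie_{\partial_j}$ acting on the Moyal--Weyl wedge, and then kills it under the integral via Cartan's formula $\Lie_{\partial_j}=\dd\circ\imath_{\partial_j}$ on a top form together with Stokes' theorem --- essentially the ``fully algebraic alternative'' you sketch in your last sentence. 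Your Fourier route dispatches all orders in $R^{ijk}$ in one stroke and leans directly on the already-established scalar result, at the cost of depending on the global coordinates and Fourier analysis available on $M=\real^d$; the paper's argument is structurally cleaner in that it exhibits the correction terms as exact forms, which is the mechanism more likely to survive generalization beyond flat space. One cosmetic point: the operators are built from Lie derivatives along the constant frame $\partial_i$, not from contractions, though this does not affect your argument since $\Lie_{\partial_i}\,\dd x^j=0$ is exactly what you need.
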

\begin{proof}
  We first compute 
 \eqa \wedge^{(n)}_{\bar
    p}(\eta_1\otimes\ldots\otimes\eta_n)\!\!&=&
  \wedge_{\As{\bar p}} \circ \exp\Big(\,\sum_{1\leq
    a<b<c\leq n}\, R_{a\/b\/c}\, \Big)(\eta_1\ot\cdots
  \eta_n) \label{intetas} \\[4pt]
  &=&\eta_1\wedge_{\As_{\bar p}}\cdots\wedge_{\As_{\bar p}} \eta_n\,+\,\wedge_{\As_{\bar p}}
  \circ\sum_{1\leq a<b<c\leq n}\, R_{a\/b\/c}\circ {\cal{O}}\,
  (\eta_1\ot\cdots\ot
  \eta_n)\nn\\[4pt]
  &=&\eta_1\wedge_{\As_{\bar p}}\cdots\wedge_{\As_{\bar p}}
  \eta_n\nn \\ && +\,\frac{\ell_s^4}{12}\, \wedge_{\As{\bar p}} \circ
  R^{ijk}\, \sum_{1\leq
    a<b<c\leq
    n}\, \Lie^{\,a}_{\partial_i}\, \Lie^{\,b}_{\partial_j}\, \Lie^{\,c}_{\partial_k}\circ
  {\cal{O}}\, (\eta_1\ot\cdots\ot \eta_n)\nn\\[4pt]
  &=&\eta_1\wedge_{\As{\bar p}}\cdots\wedge_{\As_{\bar p}}\eta_n\nn\\
  && +\,\frac{\ell_s^4}{12}\, \wedge_{\As_{\bar p}} \circ
  R^{ijk}\, \sum_{b=2}^{n-1}\, \Lie^{\,b}_{\partial_j} \
  \sum_{a=1}^{b-1}\, \Lie^{\,a}_{\partial_i} \ \sum_{c=b+1}^n\,
  \Lie^{\,c}_{\partial_k} \circ
  {\cal{O}} \,(\eta_1\ot\cdots\ot \eta_n)\nn\\[4pt]
  &=&\eta_1\wedge_{\As_{\bar p}}\cdots\wedge_{\As_{\bar p}}\eta_n\nn
  \\ && +\,\frac{\ell_s^4}{12}\,\wedge_{\As_{\bar
      p}} \circ
  R^{ijk}\, \sum_{b=2}^{n-1} \ \sum_{e=1}^n\, \Lie^{\,e}_{\partial_j}
  \ \sum_{a=1}^{b-1}\, \Lie^{\,a}_{\partial_i} \ \sum_{c=b+1}^n\,
  \Lie^{\,c}_{\partial_k} \circ
  {\cal{O}} \,(\eta_1\ot \cdots\ot \eta_n)\nn\\[4pt]
  &=&\eta_1\wedge_{\As_{\bar
    p}}\cdots\wedge_{\As_{\bar p}}\eta_n\nn \\ &&
+\,\frac{\ell_s^4}{12}\, \Lie_{\partial_j}\circ\wedge_{\As_{\bar p}} \circ
  R^{ijk} \, \sum_{b=2}^{n-1} \ \sum_{a=1}^{b-1}\,
  \Lie^{\,a}_{\partial_i} \ \sum_{c=b+1}^n\, \Lie^{\,c}_{\partial_k} \circ
  {\cal{O}}\, (\eta_1\ot\cdots\ot \eta_n) \ . \nn
\eqaend
 In the second line we expanded the
  exponential by factoring the operator ${\sum_{a<b<c}\,  R_{abc}}$ as
 \bea
\exp\Big(\, \sum_{1\leq a<b<c\leq n}\, R_{abc}\,
\Big)&=&\Id+{\sum_{1\leq a<b<c\leq n}\, R_{abc}}+\frac{1}{2}\, \Big(\,
{\sum_{1\leq a<b<c\leq n}\, 
    R_{abc}}\, \Big)^2+\cdots \nn\\[4pt] &=:& \Id+{\sum_{1\leq a<b<c\leq n}\,
    R_{abc}}\circ {\cal O} \ .
\eea
  In the fifth line we used antisymmetry of $R^{ijk}$ to
  replace $\Lie^b_{\partial_j}$ with
  $\sum_{e=1}^n \, \Lie^e_{\partial_j}$.  In the last line we used
  the Leibniz rule 
\beq
\Lie_{\partial_i}(\eta_1\wedge_{\As_{\bar p}}\cdots\wedge_{\As_{\bar
    p}}\eta_n)=\wedge_{\As_{\bar p}}\circ\sum_{e=1}^n\,
\Lie^{\,e}_{\partial_j}(\eta_1\ot\cdots \ot \eta_n) \ .
\eeq
Next we use the Cartan formula for the Lie derivative in terms of the
exterior derivative and the contraction operator as
$\Lie_{\partial_j}=\imath_{\partial_j}\circ \dd+\dd\circ \imath_{\partial_j}$,
observe that when acting on a  top form it simplifies to
$\Lie_{\partial_j}=\dd\circ \imath_{\partial_j}$, and the result then follows by 
integrating (\ref{intetas}). 
\end{proof}

Similarly to the exterior algebra one can also deform the tensor algebra.
As for the deformed exterior product $\wedge_{\star_\FF}$, the deformed tensor product
$\otimes_{C^\infty({\cal M})_{\star_\FF}}$ is defined by composing the usual tensor
product $\otimes_{C^\infty({\cal M})}$ over $C^\infty({\cal M})$ with
  the inverse twist: 
$\otimes_{C^\infty({\cal M})_{\star_\FF}}:=\otimes_{C^\infty({\cal M})}\circ \FF^{-1}$.

\subsection{Phase space diffeomorphisms}

The Drinfeld twist deformation procedure we have been implementing
consists in deforming algebras that carry a compatible representation
of the Hopf algebra $U(\mfh)$ 
(cf. the first two items in the list of  Section \ref{DiffForms}). We
recall that the representation is compatible with the product in the
algebra if the action of
Lie algebra elements $X\in\mfh$ on products is given by the Leibniz
rule (and hence for elements $\zeta\in U(\mfh)$ by the
coproduct). We shall now apply this procedure to the Lie algebra of
vector fields $\vect(\Mcal)=C^\infty(\Mcal, T\Mcal)$ on phase space, i.e., the Lie algebra of
infinitesimal (local) diffeomorphisms. This is a nonassociative algebra
with product given by the Lie bracket $[\quad]:\vect(\Mcal)\otimes \vect(\Mcal)\to \vect(\Mcal)$.
Thanks to the representation (\ref{rep}) the Lie algebra $\mfh$ can be
regarded as a subalgebra of $\vect(\Mcal)$ and therefore its action is given by the Lie
derivative:
$\Lie_X(\,\uu\,)=[X,\uu\,]$ for all $X\in \mfh$, $\uu\in \vect(\Mcal)$. The compatibility condition is satisfied because
the Jacobi identity
$[X,[\,\uu\,,\,\uv\,]]=[[X,\uu\,],\uv\,]+[\,\uu\,,\,[X,\uv\, ]]$ is the Leibniz
rule with respect to the product $[\quad]$.

We can thus apply the Drinfeld twist deformation procedure with
2-cochain ${\cal F}=F\, F'$ and obtain the deformed algebra of vector
fields $\vect(\Mcal)_{\star_{\cal F}}=\vect(\Mcal)_{\star_{F\, F'}}$, which
as a vector space is the same as $\vect(\Mcal)$ but has a deformed Lie bracket 
\begin{eqnarray}
[\quad]_{\star_{\cal F}}\,:\, \vect(\Mcal)_{\star_{\cal F}}\otimes
  \vect(\Mcal)_{\star_{\cal F}}&\longrightarrow& \vect(\Mcal)_{\star_{\cal F}} \, \nonumber\\
\uu\otimes\uv &\longmapsto& [\,\uu\,,\,\uv\,]_{\star_{\cal F}}:=
[\quad]\circ {(F\,F'\,)}^{-1}(\,\uu\otimes \uv\, ) \ .
\end{eqnarray}
This can be realized as a deformed commutator. For this, we introduce the
notation $\FF=\FF^\al\otimes\FF_\al$, 
$\FF^{-1}=\overline{\FF}{}^{\al}\otimes\overline{\FF}_\al$;
define the \emph{universal ${\cal R}$-matrix} ${\cal R}=\FF_{21}\, \FF^{-1}$  where
$\FF_{21}=\FF_\al\otimes \FF^\al$ and denote its inverse by ${\cal
  R}^{-1}=\overline{\cal R}{}^\al\otimes \overline{\cal R}_\al$. We then compute
\eqa 
[\,\uu\,,\,\uv\,]_{\star_{\FF}}&=&\big[\,
\overline{\FF}{}^\alpha(\,\uu\,),\overline{\FF}_\alpha(\, \uv\, ) \, \big] \nn
\\[4pt] &=& \overline{\FF}{}^\alpha(\,\uu\,)\,
\overline{\FF}_\alpha(\,\uv\, )-\overline{\FF}_\alpha(\, \uv\, )\, \overline{\FF}{}
^\alpha(\,\uu\, ) \ = \ \uu \, \star_{\FF}\, \uv-\overline{\cal
  R}{}^\alpha(\,\uv\, )
 \star_{\FF}\overline{\cal R}_\alpha(\,\uu\,)~,
\eqaend
where we wrote the undeformed Lie bracket as a
commutator and introduced the $\star_{\FF}$ product between vector fields
$\uu\, \star_{\FF} \, \uv:=\overline{\FF}{}^\al(\,\uu\,)\,
\overline{\FF}_\al(\,\uv\, )$ which is a
deformation of the product on the universal enveloping algebra of
$\vect(\Mcal)$.
It is easy to see that the bracket  $[\quad]_{\star_{\FF}} $ 
has the $\star_{\FF}$ antisymmetry property
\beq\label{sigmaantysymme}
[\,\uu\,,\,\uv\,]_{\star_\FF} =\big[\,
\overline{\FF}{}^\al(\,\uu\,),\overline{\FF}_\al(\,\uv\,)\, \big]=-\big[\,
\overline{\FF}_\al(\,\uv\,),
\overline{\FF}{}^\al(\,\uu\,)\, \big]=
-\big[\,\overline{\cal R}{}^\al(\,\uv\,), \overline{\cal
  R}_\al(\,\uu\,)\, \big]_{\star_\FF} \ .
\eeq
We can write this relation as
$[\quad]\circ\FF^{-1}(\,\uu\otimes \uv\, )=-[\quad]\circ\FF^{-1}\,{\cal R}^{-1}\circ \sigma
(\,\uu\otimes \uv\,)$ where $\sigma$ is the transposition $\sigma
(\,\uu\otimes \uv\,)=\uv\otimes \uu$. This notation emphasizes the
antisymmetry of the
$\star_\FF$ bracket with respect to transpositions implemented by the operator ${\cal R}^{-1}\circ\sigma$.

We can next implement  the transposition of the elements $\uu$ and $\uv$ in
the triple tensor product
$\uu\otimes (\uv\otimes \uz)\in  \vect(\Mcal)_{\star_\FF}\otimes (\vect(\Mcal)_{\star_\FF}\otimes
\vect(\Mcal)_{\star_\FF})$. The bracketing hints that we are later on going
  to apply the operator $\Id\otimes [\quad ]$. There is actually deeper information
  in the bracketing~\cite{Drinfeld, Barnes2014}, as it defines the
  action of the ${\cal F}$ deformed Hopf algebra of
  translations and Bopp shifts on triple tensor products. We therefore first have to identify $\vect(\Mcal)_{\star_\FF}\otimes (\vect(\Mcal)_{\star_\FF}\otimes
\vect(\Mcal)_{\star_\FF})$ with $(\vect(\Mcal)_{\star_\FF}\otimes \vect(\Mcal)_{\star_\FF})\otimes
\vect(\Mcal)_{\star_\FF}$ as quasi-Hopf algebra representations, and this is done via
the action of the inverse associator $\Phi^{-1}=\e^{-2R}$; 
then we can apply the operator ${\cal R}_{12}^{-1}\circ\sigma_{12}
=({\cal R}^{-1}\circ\sigma)\otimes \Id$ and finally go back to
$\vect(\Mcal)_{\star_\FF}\otimes (\vect(\Mcal)_{\star_\FF}\otimes \vect(\Mcal)_{\star_\FF})$ via
$\e^{2R}$. Thus the transposition map in this case is
$\e^{2R}\circ {\cal R}^{-1}_{12}\circ \sigma_{12}\circ \e^{-2R}$ which,
since $\e^R$ is central and $R_{123}=-R_{213}$, simplifies to 
${\cal R}^{-1}_{12}\circ \sigma_{12}\circ \e^{-4R}$. Hence by setting
(with summation over the indices ${}^{\bar\phi\bar\phi}$ understood)
\eq\label{betauvz}
\uu^{\pphim}\otimes  \uv^{\pphim}\otimes \uz^{\pphim}:=\e^{-4R} (\,
\uu\otimes \uv\otimes \uz\, ) \ ,
\ee
it explicitly acts as
\bea
\e^{2R}\circ {\cal R}^{-1}_{12}\circ \sigma_{12}\circ
\e^{-2R}(\, \uu\otimes \uv\otimes \uz\, )&=&
 {\cal R}^{-1}_{12}\circ \sigma_{12}\circ \e^{-4R}(\, \uu\otimes \uv\otimes
 \uz\, )\nn \\[4pt] &=&
\overline{\cal R}{}{}^{\al}(\, \uv^{\pphim}\, )\otimes \overline{\cal
  R}_\al(\, \uu^{\pphim}\, )\otimes \uz^{\pphim}~.\label{tuvhopf}
\eea
Independently of its quasi-Hopf algebraic aspects the relevance of
the expression (\ref{tuvhopf})
is in its appearence in the $\star_\FF$ Jacobi identity.
\begin{proposition}\label{FFJacobi} The deformed Lie algebra of infinitesimal
diffeomorphisms is characterized by the $\star_\FF$ Jacobi identity
\eq
\big[\,\uu \,,\,[\,\uv\,,\uz\,]_{\star_\FF} \, \big]_{\star_\FF} =\big[[\,
\uu^{\bar\phi},\uv^{\bar\phi}\, ]_{\star_\FF}  \,,\,\uz^{\bar\phi}\, \big]_{\star_\FF}  
+ \big[\, \overline{\cal R}{}^\al(\, \uv^{\pphim}\, )\,,\, \big[\,
\overline{\cal R}_\al(\, \uu^{\pphim}\, ),\uz^{\pphim}\, \big]_{\star_\FF} \big]_{\star_\FF} ~,
\nn \ee
for all $\uu\, ,\, \uv\, ,\, \uz\, \in\vect(\Mcal)_{\star_{\FF}}$, where 
\eq
\uu^{\bar\phi}\otimes \uv^{\bar\phi}\otimes \uz^{\bar\phi}:=
\Phi^{-1}(\, \uu\otimes \uv\otimes \uz\, )=
\e^{-2R}(\, \uu\otimes \uv\otimes \uz\, )\nn
\ee
 and 
$\uu^{\pphim}\otimes \uv^{\pphim}\otimes \uz^{\pphim}$ is
defined in (\ref{betauvz}).
\end{proposition}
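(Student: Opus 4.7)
The approach is to deduce the quasi-Jacobi identity directly from the ordinary Jacobi identity for the Lie bracket on $\vect(\Mcal)$, by re-expressing everything via the twist $\FF=F\, F'$, its cocycle failure $\Phi=\e^{2R}$, and the universal $\mathcal{R}$-matrix $\mathcal{R}=\FF_{21}\, \FF^{-1}$. Writing $\mu(\,\uu\otimes\uv\,):=[\,\uu\,,\,\uv\,]$ for the ordinary Lie bracket map on $\vect(\Mcal)$, the starting point is the undeformed operator identity
\[
\mu\circ(\Id\otimes \mu)\,=\,\mu\circ(\mu\otimes\Id)\,+\,\mu\circ(\Id\otimes\mu)\circ\sigma_{12}
\]
on $\vect(\Mcal)^{\otimes 3}$, where $\sigma_{12}$ is the transposition of the first two slots. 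The plan is to evaluate both sides on $\FF^{-1}_{1,23}(\,\uu\otimes\uv\otimes\uz\,)$, with $\FF^{-1}_{1,23}:=(\Id\otimes \Delta)(\FF^{-1})\cdot(1\otimes\FF^{-1})$ the iterated twist associated with the right-bracketing. Using the compatibility~(\ref{leibD}) of the $U(\mfh)$-action with the bracket (i.e.\ the Leibniz rule for Lie derivatives), the left-hand side unwinds into $[\,\uu\,,\,[\,\uv\,,\uz\,]_{\star_\FF}\,]_{\star_\FF}$, matching the left-hand side of the Proposition.

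For the first term on the right I would invoke the $2$-cochain failure relation~(\ref{FFfundid}), namely $\FF^{-1}_{12,3}=\FF^{-1}_{1,23}\cdot\Phi$ with $\Phi$ central in $U(\mfh)^{\otimes 3}$, to rewrite $\mu\circ(\mu\otimes\Id)\circ\FF^{-1}_{1,23}=\mu\circ(\mu\otimes\Id)\circ\FF^{-1}_{12,3}\circ\Phi^{-1}$, and recognize the result as $[[\,\uu^{\bar\phi},\uv^{\bar\phi}\,]_{\star_\FF}\,,\,\uz^{\bar\phi}\,]_{\star_\FF}$ upon identifying $\uu^{\bar\phi}\otimes\uv^{\bar\phi}\otimes\uz^{\bar\phi}=\Phi^{-1}(\,\uu\otimes\uv\otimes\uz\,)$. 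For the second term the key step is to drag $\sigma_{12}$ through $\FF^{-1}_{1,23}$: using the basic braid move $\sigma_{12}\cdot\FF^{-1}_{12}=\FF^{-1}_{12}\cdot\mathcal{R}^{-1}_{12}\cdot\sigma_{12}$ (which follows directly from $\mathcal{R}=\FF_{21}\,\FF^{-1}$), combined with~(\ref{FFfundid}) and centrality of $\Phi$, one obtains the braided commutation relation
\[
\sigma_{12}\circ\FF^{-1}_{1,23}\,=\,\FF^{-1}_{1,23}\circ\Phi\circ\mathcal{R}^{-1}_{12}\circ\sigma_{12}\circ\Phi^{-1}~.
\]
The composite $\Phi\circ\mathcal{R}^{-1}_{12}\circ\sigma_{12}\circ\Phi^{-1}$ was already computed in~(\ref{tuvhopf}) to send $\,\uu\otimes\uv\otimes\uz$ to $\overline{\cal R}{}^\al(\,\uv^{\pphim}\,)\otimes\overline{\cal R}_\al(\,\uu^{\pphim}\,)\otimes\uz^{\pphim}$, and folding this through the deformed bracket definition via the Leibniz rule~(\ref{leibD'}) produces exactly the second term on the right-hand side of the Proposition.

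The main technical obstacle lies in carefully establishing the braided commutation relation displayed above. Although each individual step is elementary, one must split $\FF^{-1}_{1,23}$ via the cocycle failure~(\ref{FFfundid}) into pieces on which $\sigma_{12}$ acts either trivially or via the elementary braid move, and then recombine using centrality of $\Phi=\e^{2R}$ in $U(\mfh)^{\otimes 3}$ together with the abelian character of $F$ and $F'$. Once this key relation is in place, everything else reduces to unwinding the deformed bracket via the Leibniz rule, in direct parallel with the manipulations used in Section~\ref{sec:cohainprod} to derive the quasi-associativity of $\star_\FF$.
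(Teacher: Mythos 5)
Your proposal is correct and follows essentially the same route as the paper's proof: both start from the undeformed Jacobi identity written as the operator relation $[\quad]\circ(\Id\otimes[\quad])=[\quad]\circ([\quad]\otimes\Id)+[\quad]\circ(\Id\otimes[\quad])\circ\sigma_{12}$ applied to $(\Id\otimes\Delta)\FF^{-1}(1\otimes\FF^{-1})(\,\uu\otimes\uv\otimes\uz\,)$, identify the first term via the 2-cochain property (\ref{FFfundid}) and the second via the braided transposition built from ${\cal R}^{-1}_{12}\circ\sigma_{12}\circ\e^{-4R}$. The only difference is presentational: you package the key step of the paper's computation (\ref{p3}) as a standalone braided commutation relation for $\sigma_{12}$ past $\FF^{-1}_{1,23}$, whereas the paper carries out the equivalent chain of equalities (using $\FF_{21}=\FF^{-1}$, centrality of $\e^R$, cocommutativity of $\Delta$ and antisymmetry of $R$) directly on the triple bracket.
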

\begin{proof}
We compute
\eqa\label{p1}
\big[\,\uu \,,\, [\,\uv\,,\uz\, ]_{\star_\FF} \, \big]_{\star_\FF} &=&
[\quad]\circ\FF^{-1}\circ (\Id \otimes
[\quad])\circ (1\otimes\FF^{-1})(\, \uu\otimes \uv\otimes \uz\, )
\nn\\[4pt]
&=&
[\quad]\circ(\Id \otimes
[\quad])\circ (\Id \otimes\Delta
)\FF^{-1}\, (1\otimes\FF^{-1})(\, \uu\otimes \uv\otimes \uz\, )
\eqaend
and
\eqa\label{p2}
\big[[\, \uu^{\bar\phi},\uv^{\bar\phi}\, ]_{\star_\FF} \,,\,
\uz^{\bar\phi}\, \big]_{\star_\FF}
&=&[\quad]\circ([\quad]\otimes\Id )\circ(\Delta\otimes \Id )\FF^{-1}\,
(\FF^{-1}\otimes 1)(\, \uu^{\bar\phi}\otimes \uv^{\bar\phi}\otimes
\uz^{\bar\phi}\, )\nn\\[4pt]
&=&[\quad]\circ([\quad]\otimes\Id )\circ
(\Id\otimes\Delta)\FF^{-1}\, (1\otimes\FF^{-1}) (\, \uu\otimes
\uv\otimes \uz\, )
\eqaend
where we used the 2-cochain property (\ref{FFfundid}). It follows that
\eqa\label{p3}
&& \big[\, \overline{\cal R}{}^\al(\, \uv^{\pphim})\,,\, [\, \overline{\cal
   R}_\al(\,\uu^{\pphim}\, ), \uz^{\pphim}\, ]_{\star_\FF}  \, \big]_{\star_\FF}
\nn \\ && \qquad \qquad \ = \ [\quad]\circ(\Id \otimes
[\quad])\circ (\Id \otimes\Delta )\FF^{-1}\, (1\otimes\FF^{-1}) \, 
 {\cal R}^{-1}_{12}\circ \sigma_{12}\circ \e^{-4R}
(\, \uu\otimes \uv\otimes \uz \, )\nn\\[4pt]
&& \qquad \qquad \ = \
[\quad]\circ(\Id \otimes
[\quad])\circ
(\Delta\otimes\Id)\FF^{-1}\, (\FF\otimes
1)\, \e^{-2R}\circ \sigma_{12}\circ \e^{-4R}
(\,\uu\otimes \uv\otimes \uz\, )\nn\\[4pt]
&& \qquad \qquad \ = \ [\quad]\circ(\Id \otimes
[\quad])\circ
\sigma_{12}\circ
(\Delta\otimes\Id)\FF^{-1}\, (\FF^{-1}\otimes
1)\, \e^{-2R}
(\, \uu\otimes \uv\otimes \uz\, )\nn\\[4pt]
&& \qquad \qquad \ = \
[\quad]\circ(\Id \otimes
[\quad])\circ
\sigma_{12}\circ
(\Id\otimes \Delta)\FF^{-1}\, (1\otimes \FF^{-1})
(\, \uu\otimes\uv\otimes \uz \, )
\eqaend
where in the third line we used the 2-cochain property
(\ref{FFfundid}) and noticed that since in our case
$\FF_{21}=\FF^{-1}$ we have  ${\cal
  R}^{-1}=\FF\, \FF^{-1}_{21}=\FF^2$. In the fourth line we moved the
permutation $\sigma_{12}$ to the left and used $\e^{-2R}\circ
\sigma_{12}=\sigma_{12}\circ \e^{2R}$, $\FF\circ\sigma=\sigma\circ
\FF^{-1}$ and the fact that the undeformed coproduct $\Delta(\zeta)$ of any
element $\zeta\in U({\mfh})$ is {\it cocommutative}, i.e., it is
symmetric with respect to exchange of its two legs, see (\ref{DeS}).
In the last line we again used (\ref{FFfundid}).
The proof now follows by observing that the undeformed Jacobi identity is
equivalent to the equality
\beq
[\quad]\circ(\Id \otimes
[\quad])=[\quad]\circ([\quad]\otimes\Id) +[\quad]\circ(\Id \otimes
[\quad])\circ
\sigma_{12}
\eeq
of operators on tensor products of vector fields.
\end{proof}

Note that the underlying mathematical structure we are
constructing is that of a quasi-Lie algebra of a quasi-Hopf algebra.  
The original Hopf algebra is the universal enveloping algebra of vector
fields $U(\vect(\Mcal))$. This is deformed to a quasi-Hopf algebra
$U(\vect(\Mcal))^\FF$ via the 2-cochain $\FF\in U(\mfh)\otimes U(\mfh)\subset
U(\vect(\Mcal))\ot U(\vect(\Mcal))$ (with the inclusion via the representation (\ref{rep})). The quasi-Lie algebra
of vector fields $\vect(\Mcal)_{\star_\FF}$ is the linear space $\vect(\Mcal)$ with the bracket
$[\quad]_{\star_\FF}  : \vect(\Mcal)_{\star_\FF}\otimes \vect(\Mcal)_ {\star_\FF}\to \vect(\Mcal)_{\star_\FF}$.

In the commutative case the commutator $[\,\uu\,,\uv\,]$ is equal to the
Lie derivative $\Lie_\uu(\, \uv\, )$.
This motivates the definition of the $\star_\FF$ Lie derivative as
\eq
\Lie^{\star_\FF}_\uu:=\Lie_{\overline{\FF}{}^\al(\,\uu\,)}\circ
\overline{\FF}_\al~,\label{defstLieder}
\ee
for
all $\uu\,,\,\uv\in \vect(\Mcal)_{\star_\FF}$, so that $\Lie_\uu^{\star
  \FF}(\,\uv\, )=[\, \uu\, ,\uv\, ]_{\star_\FF}$. 
The $\star_\FF$ Lie derivative is given by composing the usual
Lie derivative with the inverse twist $\FF^{-1}$.
The definition (\ref{defstLieder}) holds more generally when the
$\star_\FF$ Lie derivative acts on exterior forms or tensor fields.
The same algebra as that of Proposition~\ref{FFJacobi} shows that the
$\star_\FF$ Lie derivative satisfies the deformed Leibniz rule
\eq\label{Leiblie}
\Lie^{\star_\FF}_\uu(\, \underline\eta\wedge_{\star_\FF}
\underline\omega\, )=\Lie_{\uu^{\bar\phi}}^{\star_\FF}(\,
\underline\eta^{{\bar\phi}}\, )\wedge_{\star_\FF} \underline\omega^{\bar\phi} + 
\overline{\cal R}{}^\al(\, \underline\eta^{\pphim}\,
)\wedge_{\star_\FF} \Lie_{\overline{\cal R}_\al(\, \uu^{\pphim}\,
  )}^{\star_\FF}(\, \underline\omega^{\pphim}\, ) \ ,
\ee 
for
all $\uu\in \vect(\Mcal)_{\star_\FF}$,
$\underline{\eta}\,,\,\underline{\omega}\in\Omega^\bullet
(\Mcal)_{\star_\FF}$, where 
$\uu^{\bar\phi}\otimes {\underline\eta}^{\bar\phi}\otimes
\underline\omega^{\bar\phi}= \e^{-2R}(\, \uu\otimes
\underline\eta\otimes \underline\omega\, )$ 
 and 
$\uu^{\pphim}\otimes \underline\eta^{\pphim}\otimes\underline\omega^{\pphim}=
\e^{-4R}(\, \uu\otimes \underline\eta\otimes \underline\omega\,
)$. For this, we note that the Leibniz rule for the undeformed Lie derivative can be 
written as $\Lie\circ (\Id\otimes \wedge)=\wedge\circ
(\Lie\otimes\Id)+\wedge\circ(\Id\otimes\Lie)\circ\sigma_{12}$, where
we used the notation $\Lie(\, \uu\,,\,\underline\eta\, ):=\Lie_\uu(\,
\underline\eta\, )$. The Leibniz rule
for tensor fields is then obtained by replacing differential forms with tensor fields and
the deformed exterior product $\wedge_{\star_\FF}$ with the deformed
tensor product $\otimes_{{C^\infty\!({\cal M})}_{\star_\FF}}$.

\subsection{Configuration space diffeomorphisms}

We have described the deformed  Lie algebra of infinitesimal
diffeomorphisms on noncommutative and nonassociative phase space, and their action on phase space forms and tensors.
We next study the induced deformed Lie algebra and action in
configuration space. For this, we take advantage of the global coordinate system $\{x^i,p_i\}$
on phase space to define pullbacks of vector fields (more generally we could consider any phase space $\cal
M$ that is foliated with leaves isomorphic to configuration space
$M$).  Any vector field on ${\cal M}$ is of the form
$\uv=v^i(x,p)\, \partial_i+\tilde{v}_i(x,p)\, \tilde{\partial}^i$.
We define the Lie subalgebra of vector fields
\eq
{}^H\vect(\Mcal)=\big\{\uv\in\vect(\Mcal)~\big|~\uv=v^i(x,p)\, \partial_i
\big\}
\ee
and observe that the Lie algebra $\mfh$ leaves ${}^H\vect(\Mcal)$ invariant:
$\big[\mfh,{}^H\vect(\Mcal) \big]\subset {}^H\vect(\Mcal)$, hence we can
apply the deformation procedure to ${}^H\vect(\Mcal)$ and obtain the deformed
Lie algebra ${}^H\vect(\Mcal)_{\star_\FF}$. 

There is an obvious one-to-one correspondence between coordinate
vector fields $\partial_i$ on configuration space $M$ and
coordinate vector fields $\partial_i$ on phase space ${\cal M}$. 
Thus, on one hand we can now inject the Lie algebra $\vect(M)$ in ${}^H\vect(\Mcal)$ by defining
\eq
\pi^\ast: \vect(M)\longrightarrow {}^H\vect(\Mcal) \qquad \mbox{with}
\quad v^i\, \partial_i\longmapsto \pi^\ast(v^i)\, \partial_i~.
\ee
On the other hand, given the section $s_{\bar{p}}:M\to{\cal M}~,~
x\mapsto (x,\bar{p})$ we can project the Lie algebra ${}^H\vect(\Mcal)$ to $\vect(M)$ by defining
\eq
s_{\bar p}^\ast: {}^H\vect(\Mcal)\longrightarrow \vect(M) \qquad
\mbox{with} \quad \uv=v^i\, \partial_i\longmapsto s^*_{\bar p}(v^i)\, \partial_i~.
\ee
 We can then characterize the deformed Lie algebra of infinitesimal
diffeomorphisms $\vect(M)_{\bar{p}}$ as the vector space $\vect(M)$ with
the brackets 
\bea
[u,v]^{(2)}_{\bar{p}}&:=&s^*_{\bar{p}}\big([\pi^*u,\pi^*v]_{\star_\FF}\big)
\ , \\[4pt] \label{threer}
\big[u\,,\,[v,z]\big]^{(3)}_{\bar{p}}&:=&
s^*_{\bar{p}}\big(\big[\pi^*u\,,\,[\pi^*v,\pi^*z ]_{\star_\FF}
\big]_{\star_\FF}\big) \ , \\[4pt] \label{threel}
\big[[u,v]\,,\,z \big]^{(3)}_{\bar{p}}&:=& s^*_{\bar{p}}\big(\big[[\pi^*u,\pi^*v]_{\star_\FF} \,,\,\pi^*z]_{\star_\FF}\big)~.
\eea

Since no derivatives in the momentum
directions appear, for two vector fields we find explicitly
\eq\label{lievect}
[u,v]^{(2)}_{\bar{p}}=s^*_{\bar{p}}\circ [\quad]\circ
{F'}{}^{-1}\, F^{-1}(\pi^* u\otimes \pi^* v)=s^*_{\bar{p}}\circ
[\quad]\circ {F'}{}^{-1}\big(\pi^* u\otimes \pi^* v\big)=[u,v]_{\As_{\bar{p}}}
\ee
where the bracket 
\eq\label{qLie}
[u,v]_{\As_{\bar{p}}}:=
[\quad]\circ 
\exp\big(\mbox{$\frac{\ii\ell_s^4}{6\hbar}\, \theta_{\bar
    p}^{ij}\, \Lie_{\partial_i}\otimes \Lie_{\partial_j}$}\big) (u\otimes v)
\ee
is the quantum Lie algebra bracket on Moyal--Weyl noncommutative
space~\cite{GR2} (see~\cite[Section~8.2.3]{book} for an
elementary introduction). In terms of the twist $F_{{\bar p}}$
and its universal ${\cal R}$-matrix defined by
\eqa
F_{{\bar p}}=\exp\big(\mbox{$-\frac{\ii\ell_s^4}{6\hbar}\, \theta_{\bar
    p}^{ij}\, \Lie_{\partial_i}\otimes \Lie_{\partial_j}$}\big) \qquad &\mbox{and}& \qquad
F^{-1}_{{\bar p}}=
\overline{F}{}^\al_{{\bar p}}\otimes {\overline{F}_{{\bar p \:\al}}}=
\exp\big(\mbox{$\frac{\ii\ell_s^4}{6\hbar}\, \theta_{\bar
    p}^{ij}\, \Lie_{\partial_i}\otimes \Lie_{\partial_j}$}\big) \ , \nn\\[4pt]
R_{{\bar  p}}={F_{{\bar p}}}_{\,21}\, F^{-1}_{{\bar
    p}}=F^{-2}_{{\bar p}} \qquad &\mbox{and}& \qquad R^{-1}_{{\bar 
    p}}=
\overline{R}{}^\al_{{\bar p}}\ot {\overline{R}_{{\bar p
      \,\al}}}=F^{2}_{{\bar p}} \ , 
\label{eq469}\eqaend
we have $[\quad]_{\As_{\bar{p}}}=[\quad]\circ F^{-1}_{{\bar p}}$
  and the $\As_{\bar p}$ antisymmetry property
\beq
[u,v]_{\As_{\bar{p}}}=-\big[\,\overline{R}^\al_{{\bar
      p}}(v),{\overline{R}_{{\bar p \, \al}}}(u)\,
  \big]_{\As_{\bar{p}}} \ .
\eeq
The bracket (\ref{lievect}) also defines the Lie derivative on vector fields.

The explicit expression for the 3-bracket
$[[u,v],z]^{(3)}_{\bar{p}}$ can be obtained by following the same
steps as in (\ref{triprodff'})--(\ref{sast}). We just substitute $\mu_{F'}$ with
$[\quad]_{F'}:=[\quad]\circ F'{}^{-1}$ and functions with vector fields; we proceed similarly
with the three bracket $[u,[v,z]]^{(3)}_{\bar{p}}$ (cf. (\ref{associat2})) and
obtain 
\bea
\big[[u,v]\,,\,z\big]^{(3)}_{\bar{p}}&=&
[\quad]_{\As_{\bar{p}}} \circ 
\big([\quad]_{\As_{\bar{p}}}\otimes\Id \big) \circ  \e^{R}
(u\otimes v \otimes z)~, \nn\\[4pt]
\big[u\,,\,[v,z]\big]^{(3)}_{\bar{p}}&=&[\quad]_{\As_{\bar{p}}} \circ\big( \Id\otimes
[\quad]_{\As_{\bar{p}}} \big) \circ  \e^{-R} (u\otimes v\otimes z)~.
\eea
The relation between the 3-brackets
$[\quad[\quad]]^{(3)}_{\bar p}$ and $[[\quad]\quad]^{(3)}_{\bar p}$
can be obtained from these explicit expressions: From the $\As_{\bar
  p}$ antisymmetry property of the bracket $[\quad]_{\As_{\bar{p}}}$,
the easily checked property of the universal
${\cal R}$-matrix $\big(\Id\ot\Delta^{F_{\bar p}} \big)R^{-1}_{\bar p}=R^{-1}_{\bar
  p\;12}\, R^{-1}_{\bar p\;13}$  and the identity $\zeta\circ
[\quad]_{\As_{\bar p}}= [\quad]_{\As_{\bar p}} \circ \Delta^{F_{\bar
      p}}(\zeta)$ (cf. (\ref{leibD'})), we have
\eq
\big[[u,v]_{\As_{\bar p}}\,,\,z\big]_{\As_{\bar p}}=-\big[\,\overline{R}{}^\al_{\bar
  p}(z)\,,\, 
\overline{R}_{\bar p\;\al}\big([u,v]_{\As_{\bar p}}\big)\, \big]_{\As_{\bar
    p}}=
-\big[\,\overline{R}{}^\be_{\bar
  p}\, \overline{R}{}^\al_{\bar
  p}(z)\,,\, \big[\,
\overline{R}_{\bar p\;\be} (u),\overline{R}_{\bar
  p\;\al}(v)\, \big]_{\As_{\bar p}}\, \big]_{\As_{\bar p}}~,
\ee
which implies 
\eq
\big[[u^{\bar\phi},v^{\bar\phi}]\,,\,z^{\bar\phi}\big]^{(3)}_{\bar{p}}=
-\big[\,\overline{R}{}^\be_{\bar
  p}\,\overline{R}{}^\al_{\bar
  p}(z)\,,\, \big[\, 
\overline{R}_{\bar p\;\be} (u),\overline{R}_{\bar
  p\;\al}(v)\, \big]\, \big]^{(3)}_{\bar p}
\ee
where we used the notation 
\eq
u^{\bar\phi}\otimes v^{\bar\phi}\otimes
z^{\bar\phi}:=\e^{-2R}(u\otimes v\otimes z)~.
\ee
 
We can now induce the deformed Jacobi identity for the 3-bracket
$[\quad[\quad]]^{(3)}_{\bar p}$ from the phase space
$\star_\FF$ Jacobi identity of Proposition \ref{FFJacobi}. We let
$(\,\uu\,, \uv\,,\uz\,)=( \pi^*u, \pi^*v, \pi^*z)$ and then pullback along
$s^*_{\bar p}$. Moving $F$ to the right of $F'$, so that its action
becomes trivial on the image of $\pi^*$, the last term simplifies as
\bea
&& s^*_{\bar p}\big( \big[\, \overline{\cal R}{}^\al(\pi^* v^{\pphim})\,,\,
\big[\,\overline{\cal
  R}_\al(\pi^*u^{\pphim}),\pi^*z^{\pphim}\,\big]_{\star_\FF} \,
\big]_{\star_\FF} \big) \nn\\
&&=s^*_{\bar p}\big( [\quad]_{F'}\circ(\Id\ot [\quad]_{F'})\circ (\Id\ot
\Delta^{F'})F^{-1}\, (1\ot 
F^{-1})\, (F'\,^2\ot 1)\, \e^{4R} (\pi^*v\ot \pi^*u\ot \pi^*z)\big)\nn\\[4pt]
&&=s^*_{\bar p}\big( [\quad]_{F'}\circ(\Id\ot [\quad]_{F'})\circ 
(\Delta^{F'}\ot \Id)\, F^{-1}\, (F'\,^2\ot 1)\, \e^{2R} (\pi^*v\ot \pi^*u\ot
\pi^*z)\big)\nn\\[4pt]
&&=s^*_{\bar p}\big( [\quad]_{F'}\circ(\Id\ot [\quad]_{F'})\circ \e^{-R}\,
(F'\,^2\ot 1)\, \e^{2R} (\pi^*v\ot \pi^*u\ot \pi^*z)\big)\nn\\[4pt]
&&= [\quad]_{\As_{\bar p}}\circ(\Id\ot [\quad]_{\As_{\bar p}})\circ \e^{-R}\,
\big(\, \overline{R}_{\bar p}{}^{\al} (v^{\bar\phi})\ot
\overline{R}_{\bar p\;\al} (u^{\bar\phi})\ot z^{\bar\phi}\, \big)\nn\\[4pt]
&&= 
\big[\, \overline{{R}}_{\bar p}{}^\al(v^{\bar \phi})\,,\,\big[\, \overline{{R}}_{\bar
  p\,\al}( u^{\bar \phi}),z^{\bar \phi}\, \big]\,
\big]^{(3)}_{\bar{p}} \ .
\eea
In the second line we used ${\cal R}^{-1}=F'\,^2\, F^2$, recalled (\ref{betauvz}) and
then the antisymmetry of the $R$-flux components $R^{ijk}$. In the third line we used
Corollary~\ref{coroll}, while in the fourth line we used the second identity of Proposition~\ref{pdf} and the equality
$(\Delta\ot\Id)F^{-1}\, (F'\,^2\ot 1)=\e^{-2R}\, (F'\,^2\ot 1)
(\Delta\ot\Id)F^{-1}$ that follows for example from the
Baker-Campbell-Haudorff formula (\ref{BCH}). In the fifth line we used
again antisymmetry of the $R$-flux components and recalled that $R_{\bar p}^{-1}=F_{\bar p}^2$.
We therefore conclude that the deformed Jacobi identity reads as
\eq
\big[u\,,\,[v,z]\big]_{\bar{p}}^{(3)}+ \big[\,\overline{R}{}^\be_{\bar
  p}\, \overline{R}{}^\al_{\bar
  p}(z)\,,\, \big[\, 
\overline{R}_{\bar p\;\be} (u),\overline{R}_{\bar
  p\;\al}(v)\, \big]\, \big]^{(3)}_{\bar p}=
\big[\, \overline{{R}}_{\bar p}{}^\al(v^{\bar \phi})\,,\,\big[\, \overline{{R}}_{\bar
  p\,\al}( u^{\bar\phi}),z^{\bar \phi}\,\big]\, \big]^{(3)}_{\bar{p}}~.
\ee

These deformed Lie algebra expressions simplify considerably on the leaf with momentum
$\bar{p}=0$, which gives the deformed Lie algebra  $\vect(M)_0$ of infinitesimal
diffeomeorphisms on configuration space; it is characterized by an 
undeformed 2-bracket $[u,v]^{(2)}_0=[u,v]$, and the 3-bracket 
\eq 
\big[u\,,\,[v,z]\big]^{(3)}_0=[\quad]\circ(\Id\ot
[\quad])\circ \e^{-R}(u\ot v\ot z)
\ee
that satisfies the deformed Jacobi identity
\eq
\big[u\,,\,[v,z]\big]_{0}^{(3)}+\big[z\,,\,[u,v]\big]^{(3)}_{0}+
\big[v^{\bar \phi}\,,\, [z^{\bar \phi},u^{\bar \phi}] \big]^{(3)}_{0}=0\label{DJI}
~.\ee
We have derived this deformed Lie algebra structure from the one on phase
space, however it is easy to verify (\ref{DJI}) directly.
We can also characterize the deformation via the  Jacobiator 
\eq
[u,v,z]_0:=
\big[u\,,\,[v,z]\big]_{0}^{(3)}+\big[z\,,\,[u,v]\big]^{(3)}_{0}+
\big[v\,,\, [z,u]\big]^{(3)}_{0}=
\big[v\,,\, [z,u]\big]^{(3)}_{0}-\big[v^{\bar \phi}\,,\, [z^{\bar
  \phi},u^{\bar \phi}] \big]^{(3)}_{0}~,
\ee
i.e., $[\quad]_0=[\quad]\circ(\Id\ot
[\quad])\circ \sigma_{12}\circ (\e^{-R}-1)$. We have thus derived a general
expression to all orders in the $R$-flux components $R^{ijk}$ for the Jacobiator on
arbitrary vector fields; this formula nicely illustrates how nonassociativity
can be explicitly manifest  in a theory of gravity on configuration
space.

Finally we study Lie derivatives of forms on configuration space. We define the Lie
  derivative $\Lie^{\bar p}$ by
\eq
\Lie^{\bar p}{\!}_u(\eta):=s^*_{\bar p}\circ
\Lie^{\star_\FF}_{\pi^*u}\big(\pi^*\eta\big)~,
\ee
for all $u\in \vect(M)_{\bar p}$,
  $\eta\in \Omega^\bullet(M)_{\bar p}$.  As in (\ref{lievect}), since no partial derivatives in the momentum
directions appear, and since $s^*_{\bar p\,} \dd p_i=0$, we find explicitly
\eq
\Lie^{\bar p}{\!}_u(\eta)=\Lie_{\overline{F}_{{\bar p}}^\al(u)}\big(\,
{\overline{F}_{{\bar p}\,\al}(\eta)}\, \big)=:\Lie^{\As_{\bar  p}}_u(\eta) 
\ee
where we used the notation introduced in (\ref{eq469}), and
$\Lie^{\As_{\bar  p}}$ is the ${\As_{\bar  p}}$ Lie derivative on
Moyal--Weyl noncommutative space (cf. \cite{GR2,book}).
In order to write the Leibniz rule for this infinitesimal
diffeomorphism we observe from (\ref{eq443}) that the exterior product
$\wedge^{(2)}_{\bar p}$ is the usual exterior product in Moyal--Weyl
noncommutative space: $\wedge^{(2)}_{\bar p}=\wedge_{\As_{\bar p}}$;
since the Lie derivative is also the same, we have
\eq\label{LeiblieCS}
\Lie^{\bar p}{\!}_u(\eta\wedge_{\As_{\bar p}}
\omega)=\Lie^{\bar p}{\!}_{u}(\eta)\wedge_{\As_{\bar p}}\omega+
\overline{R}_{\bar p}{}^\al(\eta)\wedge_{\As_{\bar p}}
\Lie^{\bar p}_{~\;\overline{R}_{\bar p\,\al}(u)}(\omega)~.
\ee 
In particular if we pullback this expression to the leaf
$\bar{p}=0$ we obtain the usual undeformed Lie derivative action.

\subsection*{Acknowledgments}

We thank R.~Blumenhagen, L.~Castellani, B.~Jur\v{c}o, D.~L\"ust,
D.~Minic, D.~Mylonas, A.~Schenkel and
P.~Schupp for helpful discussions, and G.~Amelino-Camelia, P.~Martinetti
and J.-C.~Wallet for the invitation to submit this
contribution to their proceedings. This work was initiated while
R.J.S. was visiting the Department of Theoretical Physics of the
University of Torino during January--February 2014, and completed
while he was visiting the Alfred Renyi Institute of Mathematics
in Budapest during March--April 2015; he warmly thanks both
institutions for support and hospitality during his
stays there. The work of R.J.S. was supported in
part by the Consolidated Grant ST/L000334/1 from the UK Science and
Technology Facilities Council.

\end{document}